\xpatchcmd{\@thm}{\thm@headpunct{.}}{\thm@headpunct{}}{}{}
\definecolor{linkcolor}{HTML}{0645AD}
\definecolor{crimson}{HTML}{A41034}
\definecolor{indianred}{HTML}{CD5C5C}
\definecolor{dodgerblue}{HTML}{1E90FF}
\def\*#1{\mathbf{#1}}
\newcommand{\E}{\mathbb{E}}
\newcommand{\indep}{\!\perp\!\!\!\perp}
\newcommand{\lb}{\mathsf{LB}}
\newcommand{\ub}{\mathsf{UB}}
\newtheorem{theorem}{Theorem}%[section]
\newtheorem{proposition}{Proposition}%[section]
\newtheorem{corollary}{Corollary}%[theorem]
\newtheorem{lemma}{Lemma}
\newtheorem{assumption}{Assumption}
\newtheorem{remark}{Remark}
\newtheorem{condition}{Condition}
\newtheorem{result}{Result}
\newtheorem{example}{Example}
\theoremstyle{definition}
\newtheorem{definition}{Definition}[section]
\newenvironment{namedassumption}[3]{% #1 = label, #2 = title, #3 = content
  \global\expandafter\def\csname title@#1\endcsname{#2}%
  \begin{assumption}\textnormal{(}#2\textnormal{)}\label{#1}. \itshape #3\end{assumption}%
}{}
\newenvironment{namedtheorem}[3]{%
  \global\expandafter\def\csname title@#1\endcsname{#2}%
  \begin{theorem}\textnormal{(}#2\textnormal{)}\label{#1}. \itshape #3\end{theorem}%
}{}
\newenvironment{namedproposition}[3]{%
  \global\expandafter\def\csname title@#1\endcsname{#2}%
  \begin{proposition}\textnormal{(}#2\textnormal{)}\label{#1}. \itshape #3\end{proposition}%
}{}
\newenvironment{namedremark}[3]{%
  \global\expandafter\def\csname title@#1\endcsname{#2}%
  \begin{remark}\textnormal{(}#2\textnormal{)}\label{#1}. \itshape #3\end{remark}%
}{}
\newcommand{\namedcref}[1]{%
  \cref{#1} (\@nameuse{title@#1})}
\newcommand{\namedCref}[1]{%
  \Cref{#1} (\@nameuse{title@#1})}
\numberwithin{equation}{section}
\title{Difference-in-differences Design with Outcomes\\Missing Not at Random}
\author{Sooahn Shin\thanks{Ph.D. Candidate, Department of Government and Institute for Quantitative Social Science, 
Harvard University.  Email: \url{sooahnshin@g.harvard.edu}. URL: \url{https://sooahnshin.com}.}
\thanks{I am grateful to Kosuke Imai, Matthew Blackwell, Naijia Liu, Soichiro Yamauchi, Imai Research Group, Andrew Q. Philips, Anton Strezhnev, and the participants at the 81st MPSA Annual Conference and 120th APSA Annual Meeting and Exhibition for their helpful comments and suggestions. All errors are my own.}
  }
\date{
    \centering
    \begin{tabular}{@{}l@{}}
        First version: April 1, 2024\\
        This version: September 16, 2024
    \end{tabular}
}
\begin{document}
\maketitle
\begin{abstract}
    This paper addresses one of the most prevalent problems encountered by political scientists working with difference-in-differences (DID) design: missingness in panel data.
    A common practice for handling missing data, known as complete case analysis, is to drop cases with any missing values over time.
    A more principled approach involves using nonparametric bounds on causal effects or applying inverse probability weighting based on baseline covariates.
    Yet, these methods are general remedies that often under-utilize the assumptions already imposed on panel structure for causal identification.
    In this paper, I outline the pitfalls of complete case analysis and propose an alternative identification strategy based on principal strata.
    To be specific, I impose parallel trends assumption within each latent group that shares the same missingness pattern (e.g., always-respondents, if-treated-respondents) and leverage missingness rates over time to estimate the proportions of these groups.
    Building on this, I tailor Lee bounds, a well-known nonparametric bounds under selection bias, to partially identify the causal effect within the DID design.
    Unlike complete case analysis, the proposed method does not require independence between treatment selection and missingness patterns, nor does it assume homogeneous effects across these patterns. 
\end{abstract}

\paragraph{Keywords}
Difference-in-differences, Causal Inference, Missingness, Panel Data, Principal Strata

\setstretch{1.5}

\newpage
\section{Introduction}
Difference-in-differences (DID) design is a quasi-experimental method in social science widely used to estimate causal effects of a treatment on an outcome variable using repeated observations of units over time. The DID design is particularly useful when the treatment is not randomly assigned, and the researcher is concerned with unobserved time-invariant confounder. Yet, one of the most prevalent problems encountered by researchers working with panel data is missingness. For example, this problem is evident in DID studies that utilize survey data to measure outcome variables, where respondent's non-response is a frequent concern. \cite{chiu2023}, in their extensive review and replication of articles from three leading political science journals using observational panel data with binary treatments, highlighted this issue with unbalanced panels. They especially emphasized that the missingness pattern is appeared to be nonrandom or extremely prevalent in some studies.

Despite the prevalence of missing data in panel studies, most methodological work presumes balanced panels without missing data. 
When faced with the methodological challenge of addressing potential bias due to missing data, researchers have often resorted to complete case analysis (listwise deletion) or imputation methods. 
However, these methods are not without their own limitations. 
Complete case analysis can lead to biased estimates when the missingness is not completely at random (MCAR) or missing at random (MAR). 
Imputation methods can also be problematic when the missingness is not at random (MNAR) or the covariates are also susceptible to missingness. 
Alternatively, when the quantity of interest is a causal estimand, a more principled approach to address missing data is to use nonparametric bounds of causal effects \citep{horowitz2000, zhang2003, imai2008, lee2009} or inverse probability weighting using baseline covariates. However, these methods are general remedies that under-utilize the assumptions already imposed on panel structure for causal identification.

This implies that the intersection of causal inference, panel data, and missing values presents a unique challenge in social science studies that remains underexplored. 
Recently, there have been attempts in disciplines adjacent to social science to address attrition bias by using parallel trends assumptions and the changes-in-changes approach \citep{ghanem2022, dukes2022}.
\cite{ghanem2022} extends the changes-in-changes condition so that the distribution of unobserved heterogeniety to be stable across time within treatment-response subpopulations. Using this assumption, they identify the average treatment effect of the treated-respondents and also show that the average treatment effect can be identified with an additional assumption that the distribution is homogeneous across treatment-response subpopulations. \cite{dukes2022} consider two alternative strategies, one based on the parallel trends assumption and the other based on the `bespoke instrumental variable' approach, yet their approach is limited to randomized experiments. 

In this paper, I provide an alternative approach of addressing missing outcome variable in panel data with DID design for observational studies. My discussion aligns with the recent studies but also differs from them by setting the DID design and parallel trends assumption for \textit{causal identification} at the center and exploring the assumptions and data structure within the DID design to address the \textit{missing data problem}. Specifically, I answer the following questions using the DID design and the principal stratification framework: Under which extension of parallel trends assumptions can we justify the complete case analysis? What are the main identification challenges with missing data in the DID design? How can we identify the average treatment effect for treated (ATT) using auxiliary variables from the panel data that offer additional information yet have not been explored?

What must not be overlooked is that missing indicator in this setup is a post-treatment intermediate variable. In this vein, I proceed to introduce principal stratification \citep{frangakis2002}--namely, always-respondents, if-treated-respondents, if-control-respondents, and never-respondents--to induce parallel trends assumptions conditioning on these latent groups that shares the same missingness pattern. I interpret the complete case analysis under this framework and discuss the main identification challenges with missing data: (1) potential dependence between selection into treatment with principal strata and (2) heterogenous effect across principal strata. I then extend the DID design using auxiliary variables (e.g. outcome variables in multiple pretreatment periods), and use its response indicator as an instrumental variable for identifying ATT. The essential intuition behind this approach is that the response indicator of the auxiliary variables can be used as an instrumental variable that affects the time trend of the outcome variable only through the treatment selection and the missingness of the post-treatment outcome variable. Lastly, I propose alternative approaches with the principal strata specific parallel trends assumption to partially identify the pricipal strata specific ATT. 
To be specific, I impose parallel trends assumption within each principal stratum and leverage missingness rates over time to estimate the proportions of these groups.
Building on this, I tailor Lee bounds \citep{lee2009}, a well-known nonparametric bounds under selection bias, to partially identify the causal effect within the DID design.

\section{Problems of the Standard Approach} \label{sec:standard}

In this section, I review the parallel trends assumptions under which the standard DID design with complete case analysis can be justified. Using the principal stratification framework, I describe the limitations of this standard methodology and discuss the main identification challenges with missing data in the DID design.

\subsection{Motivating Applications}

To illustrate the problem of missing data in the DID design, I revisit two application studies using two-periods DID design with survey data. In the first application, I revisit \cite{sexton2023} which studies how small development aid projects affect the public perception and attitudes toward government. Specifically, the authors are interested in the impacts of German development aid during 2017–18 on subsequent political attitudes in northern Afghanistan. The study uses two waves (2016 and 2018) of geocoded public opinion survey data with five different main outcome variables. As shown in Table \ref{tab:sexton_missing}, the missingness of the survey responses in the second wave is substantial, and the pattern of missingness appears to be different across outcome variables. Particularly, the missingness ratio differ in time trend (\textit{before-after}), difference between treated and control groups (\textit{treated-control}), and its interaction (\textit{differece-in-differences}). 

\begin{table}[ht!]
    \centering
    \begin{tabular}{ccccccc}
\toprule
Wave & \makecell{Treatment\\Group} & \makecell{Afghanistan\\Right\\Direction?} & \makecell{Confidence\\in\\President} & \makecell{Local\\Government\\Confidence} & \makecell{National\\Governemnt\\Good Job} & \makecell{Sympathy\\for\\Insurgents}\\
\midrule
2016 & Control & 4.52 & 0.37 & 18.57 & 0.47 & 4.63\\
2016 & Treated & 4.74 & 0.55 & 69.83 & 0.14 & 7.01\\
2018 & Control & 8.14 & 1.83 & 14.60 & 0.22 & 3.12\\
2018 & Treated & 8.28 & 0.96 & 64.83 & 1.14 & 2.34\\
\bottomrule
\end{tabular}
    \caption{Missingness of Survey Responses in \cite{sexton2023} in Percentage Points}
    \label{tab:sexton_missing}
\end{table}

In the second study, I revisit \cite{bisgaard2018} which examines the effects of elite partisan cues on economic perception. The study uses five waves of panel surveys collected from 2010 to 2011 in Denmark to track public opinion on economic issues. After the second wave of survey data, the Center-Right government in Denmark dramatically changed its partisan cue on the severity of the public budget deficit, which led to a change in the economic perception of incumbent supporters according the their findings. The missingness of the survey responses in this study is shown in Table \ref{tab:bisgaard_missing}. Given a non-ignorable amount of missingness, the authors provided additional regression analysis with the second and third waves, and concluded the missingness does \textit{not} appear to be systematically different across treatment groups and time periods, conditioning on prior perceptions of the national economy.

\begin{table}[ht!]
    \centering
    \begin{tabular}{rlrr}
    \toprule
    Wave & Period & \makecell{Treated Group\\(Incumbent Supporters)} & \makecell{Control Group\\(Opposition Supporters)}\\
    \midrule
    1 & Pre-treatment & 12.82 & 16.42\\
    2 & Pre-treatment & 39.16 & 42.26\\
    3 & Post-treatment & 44.87 & 45.72\\
    4 & Post-treatment & 47.90 & 48.40\\
    5 & Post-treatment & 54.31 & 54.45\\
    \bottomrule
    \end{tabular}
    \caption{Missingness of Survey Responses in \cite{bisgaard2018} in Percentage Points}
    \label{tab:bisgaard_missing}
\end{table}

Several questions arise from these examples, particularly related to the unique features of panel data. 
What does the trend of missingness within each treatment group imply for potential bias in the DID estimates? 
Can we directly compare such trends between different treatment groups, and if they are similar across groups, would the DID estimate from complete case analysis be unbiased? 
Answering these questions necessitates exploring diverse variants of canonical parallel trends assumptions and their substantive implications. 
As I will discuss in the following sections, the short answer is no. 
It requires an additional assumption regarding the parallel trends of the outcome variable between respondents and nonrespondents within each treatment group, which may restrict the heterogeneity of the treatment effect.

Another important aspect of these questions is how we define the groups in terms of missingness and how the composition of these groups relates to causal identification. 
In particular, it is crucial to recognize that the missingness of the outcome variable is a post-treatment variable, meaning there exists a latent group of units with different missingness patterns under different treatment statuses. 
This motivates the need for principal stratification, which examines the potential missingness patterns of the outcome variable under different treatment statuses. 
Given that we are interested in observational studies where the treatment is not randomly assigned, the composition of these latent groups may vary across treatment groups. 
This may be due to selection bias, where the treatment selection is correlated with missingness, heterogenous treatment effect across these latent groups, or both.

Lastly, it is worth noting that panel data provides additional information that can be crucial for identifying causal effects. At a minimum, researchers always have access to the missingness rate of the outcome variable over time, which can be used in causal identification. In more favorable cases, researchers may also have access to auxiliary variables from the panel data, such as other outcome measures or multiwave data, which can further aid in identifying the causal effect. In this paper, I propose a novel identification strategy that combines the principal strata framework with these auxiliary variables to address the missing data problem in the DID design.

\subsection{Setup and Notation}
I consider a two-period DID design with binary treatment and missingness in the outcome variable. Let $D_{i}$ be a binary treatment group indicator of unit $i$, where $D_{i} = 1$ for the treated group and $0$ for the control group. Let $Y_{it}$ be the outcome variable of unit $i$ at time $t = 1,2$, where time $1$ is the pre-treatment period and $2$ is the post-treatment period. I assume the consistency and no anticipation assumptions for the potential outcomes:
\begin{align*}
    &Y_{it} = D_{i} Y_{it}(1) + (1-D_{i}) Y_{it}(0) \text{ for } t = 1,2 &\text{ (Consistency)}\\
    &Y_{it}(1) = Y_{it}(0) \text{ for } t = 1 &\text{ (No anticipation of treatment)}
\end{align*}
Let $R_{i}$ denote a binary response indicator at time $t = 2$. $R_{i} = 1$ if $Y_{i2}$ is observed, and $0$ if it is missing. Here, $R_{i}(d)$ is the potential response indicator if $D_{i} = d$. I assume the same consistency and no anticipation assumptions for $R_{i}$.

One example of this setting is a study using a DID design (or two-way fixed effects) where the data comes from a survey conducted in two waves, with attrition observed in the second wave. Another potential example of missing outcomes is the ``don't know'' response to survey questions. A common practice in this case is to treat ``don't know'' as missing, resulting in a similar setup to survey data with attrition. Although I illustrate the problem of missing data in the DID design using survey attrition to facilitate understanding, the proposed methodology can be applied to other types of missing data and is not limited to surveys.

Based on the joint distribution of the response indicator and treatment group indicator we can define the following groups:
\begin{align*}
    (R_{i}, D_{i}) = 
    \begin{cases}
        (1,1) & \text{``treated-respondents''} \\
        (0,1) & \text{``treated-nonrespondents''} \\
        (1,0) & \text{``control-respondents''} \\
        (0,0) & \text{``control-nonrespondents''}
    \end{cases}
\end{align*}
One critical limitation of such a group is that it does not account for the fact that the response itself is a post-treatment variable. This group is considered crude because it only captures the realized response under a given treatment assignment and does not consider the counterfactual response (i.e., whether these individuals would have responded if they had been selected into the other treatment group). For example, ``treated-respondents'' in the first motivating example correspond to those people whose districts received aid and responded to the survey question afterward. We do not know whether this group of people would have also responded to the question if their district had not received such aid.

Alternatively, we can introduce the principal strata framework \citep{frangakis2002} in this setup based on the joint distribution of the potential outcomes of the response indicator. 
Let $S_i = (R_{i}(1), R_{i}(0))$ denote a principal strata define as below. For example, $S_i = (1,1)$ implies that unit $i$'s outcome is always observed, no matter the treatment status.
\begin{equation*}
    S_{i} \equiv (R_{i}(1), R_{i}(0)) = 
    \begin{cases}
        (0,0) & \text{``never-respondents''} \\
        (1,0) & \text{``if-treated-respondents''} \\
        (0,1) & \text{``if-control-respondents''} \\
        (1,1) & \text{``always-respondents''}
    \end{cases}
\end{equation*}
Note that this can be viewed as a latent pre-treatment covariate, distinct from the observed post-treatment response indicator $R_{i}$. 
With this, we can consider the joint distribution of this principal stratum and treatment group (e.g. never-respondents who are treated), as opposed to the realized response and treatment group  (e.g. treated-respondents, which is a mix of never-respondents and if-treated-respondents within the treated group). We use the term ``respondents'' for ease of interpretation, but it may not necessarily refer to survey respondents. In general, this term should be understood as a group of units that exhibit different missingness patterns under two possible treatment regimes.

The principal strata framework is widely utilized in the causal inference literature, particularly for examining issues of truncation by death and noncompliance. In the context of truncation by death in medical studies, for example, ``always-respondents'' corresponds to ``survivors,'' who would have always survived no matter what the medical treatment (e.g. an uptake of a medicine) was. In the context of noncompliance, ``always-respondents'' correspond to ``always-taker,'' who would have always taken the medicine no matter what the treatment assignment, an encouragement to take the medicine, was.

Similar to these issues, the potential outcome of the response indicator in our context is significant because it defines latent subgroups of units with substantively different characteristics. For example, in the first motivating application, if-treated-respondents represent a group of people who respond to these sensitive survey questions only if their district received aid. In contrast, if-control-respondents are those who respond to the survey questions only if their district did not receive the aid. One can imagine that there might be a systematic reason for these opposing behaviors, which could be related to the impact of aid. In the second motivating application, always-respondents are individuals who consistently respond to survey questions about economic issues, regardless of partisan cues. In contrast, never-respondents are those who never respond to the survey questions on economic issues. These two groups may possess distinct characteristics, such as varying levels of political engagement or interest in economic matters, which could correlate with partisanship. 

Two main issues with different principal strata are that (1) the proportions of these latent groups may vary between the treated and control groups, and (2) the treatment effect may differ across these latent groups. I will formally demonstrate this intuition in this section, where I discuss the identification challenges within this principal strata framework. To be specific, in this paper, I treat the principal strata as one of the conditioning variables when imposing parallel trends assumptions for causal identification, a topic that will be elaborated upon in this section. For simplicity, I assume there is no missing data in the first wave (alternatively, one could assume MCAR for the pre-treatment outcome missingness for now). However, this assumption will be relaxed in the following section with the proposed solution to the missing data problem.

\subsection{Complete Case Analysis and Parallel Trends Assumptions}

To begin with, we discuss how the general practice of complete case analysis can be justified under the parallel trends assumptions. 
In the DID design, researchers are interested in identifying the ATT defined as
\begin{equation*}
    \text{ATT} = \E[Y_{i2}(1) - Y_{i2}(0) \mid D_{i} = 1]
\end{equation*}
We employ the canonical parallel trends assumption to identify the counterfactual outcome $Y_{i2}(0)$ for treated units.
\begin{namedassumption}{a:pt-outcome}{Parallel trends}{
    \begin{equation*}
        \E[Y_{i2}(0) - Y_{i1} \mid D_{i} = 1] = \E[Y_{i2}(0) - Y_{i1}  \mid D_{i} = 0]
    \end{equation*}}
\end{namedassumption}
With this assumption and the consistency assumption, we have
\begin{align}
    \text{ATT} &= \E[Y_{i2} - Y_{i2}(0) \mid D_{i} = 1] \nonumber \\
    &=\E[Y_{i2} - Y_{i1} \mid D_{i} = 1] - \E[Y_{i2}(0) - Y_{i1} \mid D_{i} = 1] \nonumber \\
    &= \E[Y_{i2} - Y_{i1} \mid D_{i} = 1] - \E[Y_{i2} - Y_{i1} \mid D_{i} = 0] \label{eq:ident-att}
\end{align}
The final expression is not identified with missing data in the outcome variable. Upon the missingness in panel data with a randomized treatment, \cite{dukes2022} consider imposing the parallel trends assumption between the respondents and nonrespondents, for the time trend under each treatment status respectively.
\begin{namedassumption}{a:pt-observed}{Parallel trends of observed outcome between respondents and nonrespondents}{
    \begin{equation*}
        \E[Y_{i2}(d) - Y_{i1} \mid R_{i} = 1, D_{i} = d] = \E[Y_{i2}(d) - Y_{i1}  \mid R_{i} = 0, D_{i} = d]
    \end{equation*}
    for $d \in \{0,1\}$.}
\end{namedassumption}
What this assumption implies is as follows. We first consider the expression under the control group, $d = 0$. 
\begin{equation*}
    \E[\underbrace{Y_{i2}(0) - Y_{i1}}_{\text{time trend}} \mid R_{i} = 1, D_{i} = 0] = \E[Y_{i2}(0) - Y_{i1}  \mid R_{i} = 0, D_{i} = 0]
\end{equation*}
It assumes that the time trend of outcome, $Y_{i2}(0) - Y_{i1}$, is on average parallel between control respondents and control nonrespondents. For example, in the motivating application, within the districts that received aid, the average change in perception over time among respondents is equal to that among nonrespondents. This can be considered a variant of the canonical parallel trends assumption, where we are interested in DID of control-respondent and control-nonrespondents groups. Since the assumption is made on the time trends, not the outcome levels, it allows for the missingness to be correlated with the outcome level. For instance, if respondents who are more positive towards the government are more likely to respond to the survey, yet the change in perception over time remains constant, the assumption may still hold. Additionally, if there are multiple pre-treatment periods without missingness, this assumption can be indirectly tested by examining the parallel trends of the outcome between control-respondent and control-nonrespondents groups.

Next, let's consider \Cref{a:pt-observed} under the treated group, $d = 1$.
\begin{align*}
    &\E[\underbrace{(Y_{i2}(1) - Y_{i2}(0))}_{\text{causal effect}} + \underbrace{(Y_{i2}(0) - Y_{i1})}_{\text{time trend}}  \mid R_{i} = 1, D_{i} = 1] \\
    &= \E[(Y_{i2}(1) - Y_{i2}(0)) + (Y_{i2}(0) - Y_{i1})  \mid R_{i} = 0, D_{i} = 1]
\end{align*}
This assumption is more restrictive than the former, as it requires the parallel trends of the \textit{sum} of causal effect and time trend of the outcome between treated-respondents and treated-nonrespondents. For example, within the districts that received the aid, the change in perception pertaining to the aid and a common time shock among the respondents is on average equivalent to that of the nonrespondents. Hypothetically, this can be violated if the missingness is correlated with the treatment effect size, holding the time trend constant. For example, if the respondents who received the aid and became more positive towards the government are more likely to respond to the survey, the assumption may be violated.

\begin{figure}[ht!]
    \centering
    \includegraphics[width = \linewidth]{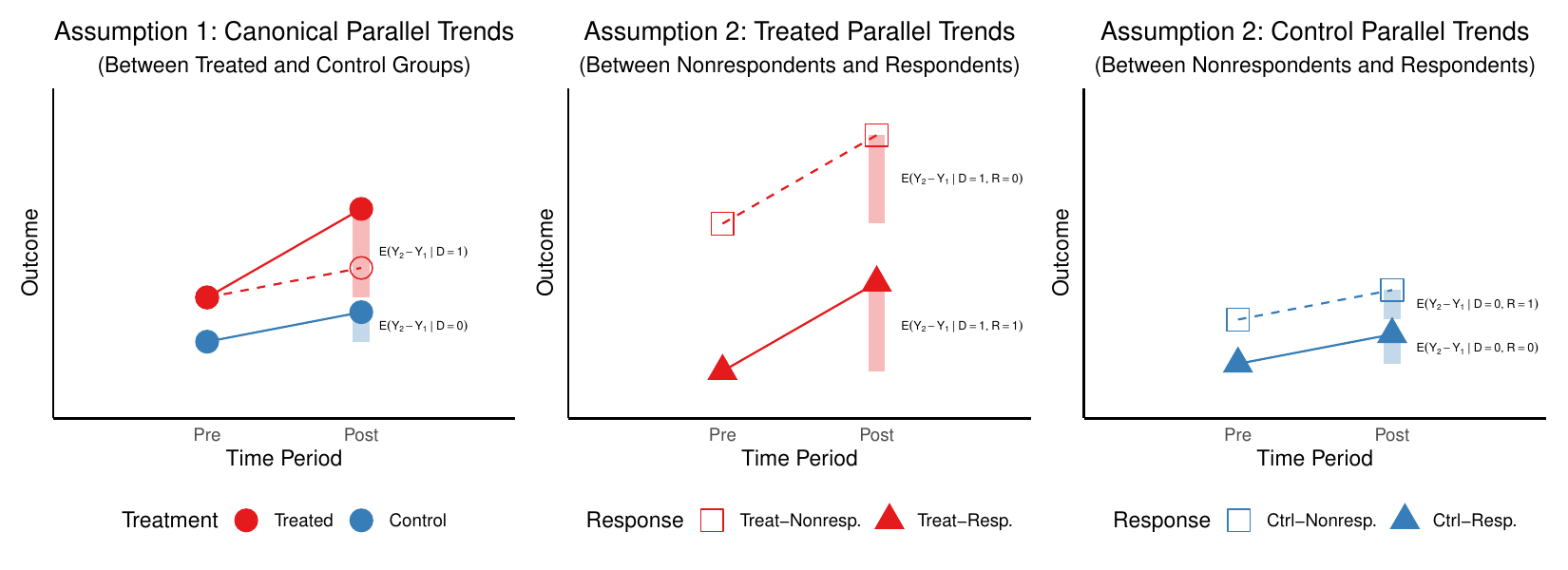}
    \caption{Illustration of the Parallel Trends \Cref{a:pt-outcome} and \Cref{a:pt-observed}. (Left): Canonical parallel trends assumption between treated and control groups. Each dot represents the expected mean outcome for the treated (red) and control (blue) groups. 
    Since these groups are a mix of respondents and nonrespondents, each quantity in post-treatment period is \textit{not} directly observed. (Middle): Parallel trends assumption between respondents and nonrespondents within the treated group. Each dot represents the expected mean outcome for the treated-respondent (red square) and treated-nonrespondent (red triangle) groups. The difference in the outcome of treated-respondents at the bottom is observed. (Right): Parallel trends assumption between respondents and nonrespondents within the control group. Each dot represents the expected mean outcome for the control-respondent (blue square) and control-nonrespondent (blue triangle) groups. The difference in the outcome of control-respondents at the bottom is observed.}
    \label{fig:illustration_cc}
\end{figure}

% Unlike \cite{dukes2022}, where one assumes randomized treatment, we consider observational studies. 
Under \Cref{a:pt-outcome} and \Cref{a:pt-observed}, the ATT can be identified by the DID estimand using complete case analysis.
The intuition is as follows:
As illustrated in the left panel of Figure \ref{fig:illustration_cc}, the canonical parallel trends assumption (\Cref{a:pt-outcome}) allows us to identify the ATT with the DID estimand.
Here, the red vertical line represents the difference in outcome for the treated group ($\E[Y_{i2} - Y_{i1} \mid D_{i} = 1]$), and the blue vertical line represents the difference in outcome for the control group ($\E[Y_{i2} - Y_{i1} \mid D_{i} = 0]$). 
By taking the difference between these two lines, we can identify the ATT as shown in Equation~\ref{eq:ident-att}.
Yet, due to the missingness in the outcome variable, these quantities are not directly observed.
Under \Cref{a:pt-observed}, however, we can identify each of these quantities by the difference in outcome of the treated-respondents and control-respondents, respectively, as shown in the middle and right panels of Figure \ref{fig:illustration_cc}.
For example, two red vertical lines in the middle panel represent the difference in outcome for the treated-respondents ($\E[Y_{i2} - Y_{i1} \mid D_{i} = 1, R_{i} = 1]$) and treated-nonrespondents ($\E[Y_{i2} - Y_{i1} \mid D_{i} = 1, R_{i} = 0]$), respectively, and are identical to the red vertical line in the left panel by \Cref{a:pt-observed}. 
Since the red vertical line at the bottom is observed ($\E[Y_{i2} - Y_{i1} \mid D_{i} = 1, R_{i} = 1]$), we can identify the difference in outcome for the treated group ($\E[Y_{i2} - Y_{i1} \mid D_{i} = 1]$). 
Similarly, we can identify that of the control group as well. 
I formally state this identification result in the following proposition.

\begin{namedproposition}{prop:cc-att}{Identification of ATT with complete case analysis}{
Under \namedCref{a:pt-outcome} and \namedCref{a:pt-observed}, we have
    \begin{equation*}
        \text{ATT} = \E[Y_{i2} - Y_{i1} \mid D_{i} = 1, R_{i} = 1] - \E[Y_{i2} - Y_{i1} \mid D_{i} = 0, R_{i} = 1]
    \end{equation*}}
\end{namedproposition}
\begin{proof}
    See Appendix~\ref{app:cc-att} for the proof.
\end{proof}
In simpler terms, if we adopt the canonical parallel trends assumption (\Cref{a:pt-outcome}), and additionally assume that there is no selection bias in the missingness of the before-after difference in the outcome variable within each treatment group (\Cref{a:pt-observed}), then the ATT can be identified using DID estimator with complete case analysis. As previously discussed, the parallel trends assumption between treated-respondents and treated-nonrespondents is more restrictive than the canonical parallel trends assumption. This is particularly the case when heterogeneous treatment effects related to missingness are present. In the subsequent section, we will delve further into the main identification challenges associated with missing data in the DID design.

\subsection{Identification Challenges with Missing Data}

In a general setup, when the parallel trends assumption is less plausible, the researcher may alternatively consider conditional parallel trends assumption where we assume parallel trends for each subgroup defined by pretreatment covariates. The rationale behind this is that the parallel trends assumption will become more plausible when the baseline covariates are balanced between the treated and control groups.
Given that the missingness is a post-treatment binary variable, we can consider the parallel trends assumptions conditioning on the principal strata, which can be viewed as a conditional parallel trends assumption, conditioning on a latent subgroup: never-respondents, if-treated-respondents, if-control-respondents, and always-respondents.

\begin{namedassumption}{a:pt-principal}{Principal strata parallel trends}{
    \begin{equation*}
        \E[Y_{i2}(0) - Y_{i1} \mid D_{i} = 0, S_{i} = s] = \E[Y_{i2}(0) - Y_{i1} \mid D_{i} = 1, S_{i} = s]
    \end{equation*}
    for $s \in \{(0,0), (0,1), (1,0), (1,1)\}$.}
\end{namedassumption}

This can be understood as a weaker assumption than the canonical parallel trends assumption (\Cref{a:pt-outcome}). It is worth pointing out that \Cref{a:pt-principal} does not imply \Cref{a:pt-outcome} in general, and vice versa.
\begin{namedremark}{r:pt}{Principal strata parallel trends does not imply canonical parallel trends}{
    \Cref{a:pt-principal} does not imply \Cref{a:pt-outcome}, and vice versa. One special case where \Cref{a:pt-principal} does imply \Cref{a:pt-outcome} is when the principal strata is independent of treatment selection, i.e. $S_{i} \indep D_{i}$.
    Under \Cref{a:pt-principal}, by the law of iterated expectation
    \begin{align*}
        \E[Y_{i2}(0) - Y_{i1} \mid D_{i} = 1] &=\sum_{s} \E[Y_{i2}(0) - Y_{i1} \mid D_{i} = 1, S_{i} = s] \Pr(S_{i} = s \mid D_{i} = 1) \\
        &=\sum_{s} \E[Y_{i2}(0) - Y_{i1} \mid D_{i} = 0, S_{i} = s] \Pr(S_{i} = s \mid D_{i} = 1)
    \end{align*}
    If $\Pr(S_{i} = s \mid D_{i} = 1) = \Pr(S_{i} = s \mid D_{i} = 0)$ then the last line is equal to $\E[Y_{i2}(0) - Y_{i1} \mid D_{i} = 0]$, which implies \Cref{a:pt-outcome}.}
\end{namedremark}

Intuitively speaking, even if the time trend of outcome is parallel between treated and control groups within a specific principal strata (e.g. always-respondents), unless the distribution of principal strata is equivalent between treated and control groups, the canonical parallel trends assumption may not hold. For example, in the second motivating application, if the incumbent supporters are more likely to respond to the questions on economic issues regardless of the partisan cues, the proportion of always-respondents may be higher in the treated group than the control group, which may potentially violate the canonical parallel trends assumption.

\begin{figure}[ht!]
    \centering
    \includegraphics[width = \linewidth]{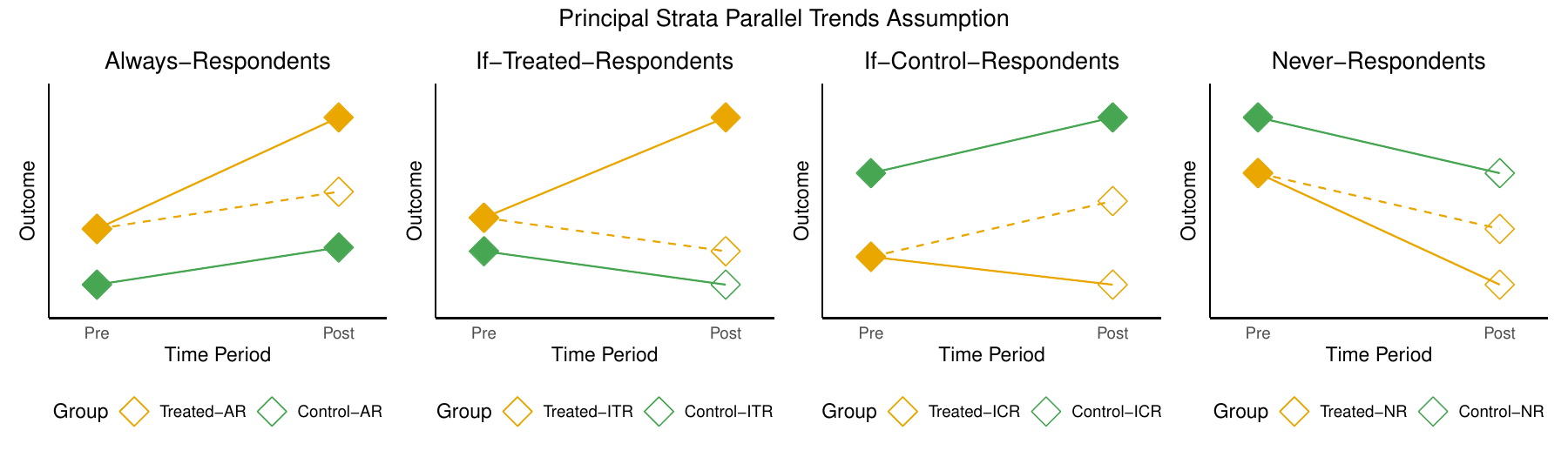}
    \caption{Toy Example of the Principal Strata Parallel Trends \Cref{a:pt-principal}. 
    Each dot represents the expected mean outcome for the treated (yellow) and control (green) groups within each principal strata.
    Filled dots represent the observed quantity and empty dots represent the unobserved quantity.
    Dashed lines represent the outcome trends under no administration of treatment for the treated group, i.e. line connecting $\E[Y_{i1} \mid D_{i} = 1, S_{i} = s]$ and $\E[Y_{i2}(0) \mid D_{i} = 1, S_{i} = s]$.
    Note that the time trend and causal effect are allowed to be heterogenous across principal strata.}
\end{figure}

Now, we discuss and clarify the main identification challenges with outcome MNAR using this principal strata parallel trends assumption.
\begin{namedproposition}{prop:decomp-att}{Decomposition of ATT with outcome MNAR}{
    Under \Cref{a:pt-principal}, we have
    \begin{align*}
        &\E[Y_{i2}(1) - Y_{i2}(0) \mid D_{i} = 1] \\
        &= \E[Y_{i2} - Y_{i1} \mid D_{i} = 1, R_{i} = 1] \Pr(R_{i} = 1 \mid D_{i} = 1) \\
        &\quad- \E[Y_{i2} - Y_{i1} \mid D_{i} = 0, \textcolor{dodgerblue}{R_{i}(1) = 1}, R_{i} = 1] \Pr(R_{i} = 1, \textcolor{dodgerblue}{R_{i}(0) = 1} \mid D_{i} = 1)\\
        &\quad- \E[Y_{i2} - Y_{i1} \mid D_{i} = 0, \textcolor{dodgerblue}{R_{i}(1) = 1}, R_{i} = 0] \Pr(R_{i} = 1, \textcolor{dodgerblue}{R_{i}(0) = 0} \mid D_{i} = 1) \\
        &\quad+ \E[\textcolor{indianred}{Y_{i2}} - Y_{i2}(0) \mid D_{i} = 1, R_{i} = 0, R_{i}(0) = 0] \Pr(R_{i} = 0, R_{i}(0) = 0 \mid D_{i} = 1) \\
        &\quad+ \E[\textcolor{indianred}{Y_{i2}} - Y_{i2}(0) \mid D_{i} = 1, R_{i} = 0, R_{i}(0) = 1] \Pr(R_{i} = 0, R_{i}(0) = 1 \mid D_{i} = 1)
    \end{align*}
    where the term in blue is not identified since principal strata is latent, and the term in red is not identified due to the missingness.
    This implies that the identification of ATT is not possible without further assumptions on:
    \begin{enumerate}
        \item Dependence of selection into treatment with principal strata
        \item Heterogeneous effect across principal strata
    \end{enumerate}
}
\end{namedproposition}
\begin{proof}
    See Appendix~\ref{app:decomp-att} for the proof.
\end{proof}

This decomposition is intuitive in the sense that it follows a natural logic starting from the respondents data and then correcting the potential bias due to the missingness. First, suppose that we naively compute $\E[Y_{i2} - Y_{i1} \mid D_{i} = 1, R_{i} = 1] \Pr(R_{i} = 1 \mid D_{i} = 1)$, which is before-after difference of treated-respondents weighted by the proportion of respondents. Then, we need to adjust for the time trend for the respondents: always-respondents and if-treated-respondents. This corresponds to the second and third lines, using the principal strata parallel trends assumption. Also, we need to incorporate the effect for the nonrespondents: never-respondents and if-control-respondents. This corresponds to the last two lines, which cannot be identified with the principal strata parallel trends assumption since the outcome is missing.

In other words, if we assume that the missingness pattern is independent of selection into treatment (i.e., $S_{i} \indep D_{i}$) \textit{and} that the treatment effect is homogeneous across principal strata within the treated group (i.e., $\E[Y_{i2}(1) - Y_{i2}(0) \mid D_{i} = 1, S_{i} = s]$ is constant for all $s$), then the ATT can be identified using the principal strata parallel trends assumption and complete case analysis as before. While these assumptions may be plausible in some applications, they are generally strong and may not hold. In particular, the assumption of a homogeneous treatment effect across principal strata within the treated group is restrictive and may be violated under missing not at random (MNAR).

In the subsequent section, we will propose an alternative approach to identify the ATT using auxiliary variables from the panel data without imposing the homogeneous treatment effect assumption across principal strata. Note that this can be viewed as MNAR in the missing data literature, where the missingness is dependent on the unobserved outcome variable.
Specifically, I first adapt an instrumental variable approach from \cite{dukes2022}, where the response indicator of the auxiliary variables is used as an instrumental variable for the missingness of the outcome variable and thus point identification of ATT is possible.
I also propose a partial identification approach based on \cite{lee2009} using pre-treatment missingness trend. This allows us to partially identify the ATT for always-respondents, in a more general setup of the dependence between the missingness and the treatment selection.

\section{The Proposed Methodology} \label{sec:proposed}

In this section, I propose two alternative approaches that do not require the homogeneous treatment effect assumption across principal strata. The first approach is based on the instrumental variable (IV) method, where the key idea is to utilize an IV that is associated with baseline missingness probability but not with the magnitude of the bias. I motivate this IV approach using randomized incentives for participation in the survey and also consider the response indicator of the auxiliary variables. The second approach is based on the partial identification method, where I tailor Lee bounds \citep{lee2009} to address the missing data problem in the DID design. Based on the parallel trends assumptions of response rate over time, I show that the ATT for always-respondents can be partially identified without requiring the homogeneous treatment effect assumption across principal strata or the independence assumption between missingness and treatment selection.

\subsection{Identification of ATT with Baseline Response Indicator} \label{sec:proposed-att}

We first consider the point identification of the ATT using an IV that captures the baseline probability of response under a canonical parallel trends assumption.
This approach is motivated by `bespoke instrument variable (IV)' from \cite{dukes2022}, in which they introduce a special type of IV that can be leveraged to identify the selection bias due to missingness \citep[also see][]{tchetgen2017, richardson2022}. 
Here, I adapt the basic idea of (1) introducing an IV that satisfies a certain exclusion restriction and (2) assuming that the bias is homogeneous across the subgroups defined by the IV. 
In \cite{dukes2022}, the focus was on a general framework for identifying selection bias due to missingness, specifically the difference $\E[Y_{i2} - Y_{i1} \mid R_{i} = 1] - \E[Y_{i2} - Y_{i1} \mid R_{i} = 0]$, under the assumption of a randomized experiment. However, this paper considers observational studies, where we employ parallel trends assumptions to identify the ATT. Our primary focus is on identifying selection bias due to missingness within each treatment group, i.e. $\delta_{d} \equiv \E[Y_{i2} - Y_{i1} \mid D_{i} = d, R_{i} = 1] - \E[Y_{i2} - Y_{i1} \mid D_{i} = d, R_{i} = 0]$ for $d = 0,1$.

To illustrate this IV approach, hypothetically assume that we offer a randomized incentive for survey participation. Let $\widetilde{R}_{i}$ denote the binary indicator of whether unit $i$ received the incentive. We consider the following set of assumptions on this indicator variable.

\begin{figure}[t!]
    \centering
    \begin{tikzpicture}
        % Nodes
        \node (Rtilde) at (-2,-1) {$\widetilde{R}$};
        \node (R) at (0,-1) {$R$};
        \node (U) at (1,0) {$U$};
        \node (Y) at (2,-1) {$(Y_2 - Y_1)$};
    
        % Arrows with observed quantities
        \draw[->, thick] (Rtilde) -- (R);
        \draw[->, thick, dotted] (U) -- (R);
        \draw[->, thick, dotted] (U) -- (Y);
    \end{tikzpicture}
    \caption{A DAG example of Randomized Incentive ($\widetilde{R}$) and Post-treatment Response Indicator ($R$). Omitted $D$ for simplicity.}
\end{figure}
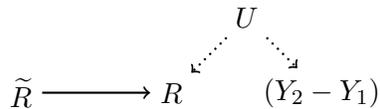

\begin{namedassumption}{a:missing_iv}{IV approach}{
    An indicator variable $\widetilde{R}_{i} \in \{0,1\}$ satisfies the following set of assumptions:
    \begin{enumerate}
        \item \textbf{Relevance to missingness}: $\widetilde{R}_{i}$ is relevant to the missingness of the post-treatment outcome.
        \begin{equation*}
            \Pr(R_{i} = 0 \mid D_{i} = d, \widetilde{R}_{i} = 0) \neq \Pr(R_{i} = 0 \mid D_{i} = d, \widetilde{R}_{i} = 1) 
        \end{equation*}
        for $d = 0,1$.
        \item \textbf{Parallel trends of observed outcome}: The observed trend of the outcome is parallel between two groups defined by $\widetilde{R}_{i}$.
        \begin{equation*}
            \E[Y_{i2} - Y_{i1} \mid D_i = d, \widetilde{R}_{i} = 0] = \E[Y_{i2} - Y_{i1} \mid D_i = d, \widetilde{R}_{i} = 1]
        \end{equation*}
        for $d = 0,1$.
        \item \textbf{Bias homogeneity}: The bias due to post-treatment missingness in the time trend of the outcome is homogeneous between two groups defined by $\widetilde{R}_{i}$.
        \begin{align*}
            \delta_{d}
            &=\E[Y_{i2} - Y_{i1} \mid D_i = d, \widetilde{R}_{i} = 0, R_{i} = 1] - \E[Y_{i2} - Y_{i1} \mid D_i = d, \widetilde{R}_{i} = 0, R_{i} = 0] \\
            &=\E[Y_{i2} - Y_{i1} \mid D_i = d, \widetilde{R}_{i} = 1, R_{i} = 1] - \E[Y_{i2} - Y_{i1} \mid D_i = d, \widetilde{R}_{i} = 1, R_{i} = 0]
        \end{align*}
        where 
        \begin{equation*}
            \delta_{d} \equiv \E[Y_{i2} - Y_{i1} \mid D_i = d, R_{i} = 1] - \E[Y_{i2} - Y_{i1} \mid D_i = d, R_{i} = 0]
        \end{equation*}
        for $d = 0,1$.
    \end{enumerate}
}
\end{namedassumption}

The first assumption, \textbf{relevance to missingness} applies if $\widetilde{R}_{i}$ is correlated with the missingness of the post-treatment outcome. 
For example, if the incentive encourages the respondents to participate in the survey, then the missingness of the outcome may be negatively correlated with the receipt of the incentive.
Intuitively speaking, if we consider the post-treatment missingness to be a combination of MAR (in terms of baseline response probability) and MNAR mechanisms, $\widetilde{R}_{i}$ helps us control for the MAR component of the missingness.
Unlike others, this assumption can be empirically tested. 

The second assumption, \textbf{parallel trends of observed outcome}, corresponds to \namedCref{a:pt-observed} with regards to $\widetilde{R}_{i}$ instead of $R_{i}$. As discussed in a previous section, this assumption can be violated if $\widetilde{R}_{i}$ is correlated with the treatment effect size while the time trend remains constant. This suggests that a crucial criterion for $\widetilde{R}_{i}$ as an IV is the expectation of homogeneous effects between groups defined by this indicator. 
Under the incentive scenario, this assumption is plausible since the incentive is randomized.

The last assumption, \textbf{bias homogeneity}, is the core assumption for using $\widetilde{R}_{i}$ to correct the bias. This assumption holds if the bias resulting from post-treatment missingness is consistent across individuals who received the incentive and those who did not, within each treatment group. This assumption might be violated if there is an unmeasured confounder, denoted as $U_{i}$, that interacts with the outcome and $\widetilde{R}_{i}$. For instance, in an additive outcome model, the presence of an interaction term $U_{i} \times \widetilde{R}_{i}$ suggests that the bias in post-treatment outcomes between two groups defined by $\widetilde{R}_{i}$ could be heterogeneous. 

Intuitively speaking, when we consider all these assumptions together, we can deduce that any observed differences between the $\{\widetilde{R}_{i} = 0\}$ and $\{\widetilde{R}_{i} = 1\}$ groups are due to their association with post-treatment missingness, which induces bias. 
The identification result is formally stated in the following theorem.

\begin{namedtheorem}{thm:att-iv}{Identification of ATT using IV}{
    Under \Cref{a:pt-outcome} and \ref{a:missing_iv}, the ATT can be identified using $\widetilde{R}_{i}$ as an IV.
        \begin{align*}
            &\text{ATT} \\
            &= \E[Y_{i2} - Y_{i1} \mid D_{i} = 1, R_{i} = 1] - \E[Y_{i2} - Y_{i1} \mid D_{i} = 0, R_{i} = 1]  \\
            &\quad + \frac{\E[Y_{i2} - Y_{i1} \mid D_{i} = 1, \widetilde{R}_{i} = 1, R_{i} = 1] - \E[Y_{i2} - Y_{i1} \mid D_{i} = 1, \widetilde{R}_{i} = 0, R_{i} = 1]}{\Pr(R_{i}  = 0 \mid D_{i} = 1, \widetilde{R}_{i} = 0) - \Pr(R_{i}  = 0 \mid D_{i} = 1, \widetilde{R}_{i} = 1)} \\
            &\qquad\qquad \times \Pr(R_{i} =0 \mid D_{i}=1)  \\
            &\quad - \frac{\E[Y_{i2} - Y_{i1} \mid D_{i} = 0, \widetilde{R}_{i} = 1, R_{i} = 1] - \E[Y_{i2} - Y_{i1} \mid D_{i} = 0, \widetilde{R}_{i} = 0, R_{i} = 1]}{\Pr(R_{i}  = 0 \mid D_{i} = 0, \widetilde{R}_{i} = 0) - \Pr(R_{i}  = 0 \mid D_{i} = 0, \widetilde{R}_{i} = 1)} \\
            &\qquad\qquad \times \Pr(R_{i} =0 \mid D_{i}=0)
        \end{align*}
    }
    \end{namedtheorem}
        
    \begin{proof}
        Proof is straightforward by Lemma~\ref{lem:inst} in Appendix~\ref{app:proposed-att}.
    \end{proof}
    
So far, I have used a randomized incentive for survey participation as a potential choice of an IV. 
Alternatively, one might consider using the response indicator of auxiliary variables as an IV. 
These auxiliary variables can be constructed from other pre-treatment variables, such as those survey questions unrelated to the main outcome of interest.
For instance, in the first motivating example, we could use other outcome variables (e.g., ``Afghanistan right direction?'') from the 2016 wave as auxiliary variables. However, it is crucial to ensure that the auxiliary variable satisfies the assumptions in \Cref{a:missing_iv}.
For example, in the initial motivating application, if ``local government confidence'' is the primary outcome of interest and there is concern that the missingness of the auxiliary variable correlates with the treatment effect size, it would be advisable to choose another outcome variable that is not directly associated with local governance.

In Appendix~\ref{app:proposed-att}, I generalize the setup to allow for missingness in pre-treatment periods and discuss the identification of the ATT using the IV approach. 
In Appendix~\ref{app:multi}, I also present a variant of the IV method discussed above, where the assumption of parallel trends in the observed outcome for the auxiliary variable is relaxed. Instead, this variant employs multiple auxiliary variables that exhibit consistent bias between respondents and nonrespondents. This approach offers a more flexible framework, accommodating scenarios where the missingness of the auxiliary variables may correlate with the treatment effect size.

Despite its potential, the proposed IV approach has several limitations. First and foremost, the bias homogeneity assumption is crucial for the identification of the ATT, yet its validity is difficult to verify.
\cite{tchetgen2017} provides an example of semiparametric shared parameter model that satisfies the bias homogeneity assumption. 
A future research direction could be to show which types of models in our setup satisfy this assumption. 
For example, a separable model of the post-treatment missingness in terms of $\widetilde{R}_{i}$ and an unmeasured confounder $U_{i}$ could be a potential candidate
\footnote{For instance, consider the following semiparametric model:
\begin{align*}
    \E[Y_{2} - Y_{1} \mid  U, D] &= f(D) + U\\
    \text{logit} \Pr (R = 1 \mid \widetilde{R}, U, D) &= \underbrace{g_1(D,\widetilde{R}) + g_2(D) U}_{\text{A separable model}}\\
    \text{logit} \Pr (\widetilde{R} = 1 \mid \widetilde{U}, D) &= h(D)\widetilde{U}\\
    U \mid R = 0, \widetilde{R}, D &\sim m(D,\widetilde{R}) + \zeta
\end{align*}}.
Alternatively, as mentioned in \cite{tchetgen2017}, one may consider a hypothesis test of $\E[Y_{i2} - Y_{i1} \mid D_i = d, \widetilde{R}_{i} = 1, R_{i} = 1] - \E[Y_{i2} - Y_{i1} \mid D_i = d, \widetilde{R}_{i} = 1, R_{i} = 0] = 0$ without making such bias homogeneity assumption.

More importantly, we apply the canonical parallel trends assumption (\Cref{a:pt-outcome}) in this approach instead of the principal strata parallel trends assumption (\Cref{a:pt-principal}).
It's crucial to highlight that, should we choose to adopt assumption \Cref{a:pt-principal} in lieu of \Cref{a:pt-outcome}, it may be required to introduce an additional assumption such as the selection into treatment is independent of the principal strata (see Remark~\ref{r:mist-ind} for more details). In the subsequent section, I modify this assumption by allowing the selection into treatment to depend on the principal strata, and propose a partial identification approach to estimate the ATT for the subgroup of always-respondents. 

\subsection{Partial Identification of ATT for Always-Respondents with Response Parallel Trends} \label{sec:proposed-partial}

In this section, I introduce a partial identification of the ATT specifically for the subgroup of always-respondents, leveraging the trend observed in pre-treatment missingness. The partial identification follows the `trimming bounds' approach of \cite{zhang2003} and \cite{lee2009}, where the bounds are derived by considering the extreme cases based on the observed distribution.
Here, we are interested in the following quantity:
\begin{equation*}
    \text{ATT-AR} = \E[Y_{i2}(1) - Y_{i2}(0) \mid D_{i} = 1, R_{i2} (1) = 1, R_{i2} (0) = 1]
\end{equation*}
ATT-AR is an average treatment effect among always-respondents who are selected into the treatment. 
As discussed in the previous section, if researchers believe that the missingness is MNAR, but are reluctant to make additional assumptions about effect heterogeneity across principal strata, then the ATT-AR is the only quantity that can be partially identified with minimal assumptions including \namedCref{a:pt-principal}. When the primary interest is in the ATT, this approach serves as a valuable starting point to understand the potential biases introduced by missingness and to develop a corresponding sensitivity analysis.

Since we utilize pre-treatment missingness in this approach, we slightly modify the notation to distinguish the missingness indicator at different time points. Specifically, we use $R_{it}$ instead of $R_{i}$ to denote the response indicator of the outcome variable at time $t$. That is, $R_{it} = 1$ if $Y_{it}$ is observed, and $0$ if it is missing, for $t = 1,2$. We also assume that we are interested in the ATT for the population who responded to the survey at time 1, while omitting $R_{i1} = 1$ in the conditioning set for simplicity of notation. Alternatively, we could consider an exclusion restriction type of assumption for pre-treatment missingness and time trends of the outcome, as described in \Cref{a:pt_mar} in \Cref{app:proposed-att}.

\paragraph{Trimming bounds.}
We first review the main idea of the trimming bounds approach.
Let $\pi_{r_{1}, r_{0}}(d) \equiv \Pr(R_{i2}(1) = r_{1}, R_{i2}(0) = r_{0} \mid D_{i} = d)$.
For now, suppose that we identified the proportion of principal strata in the treated group. We will discuss more about this in the later part.

Under \namedCref{a:pt-principal}, we have
\begin{equation*}
    \text{ATT-AR} = \E[Y_{i2} - Y_{i1} \mid D_{i} = 1, R_{i2} (1) = 1, R_{i2} (0) = 1] + \E[Y_{i2} -  Y_{i1} \mid D_{i} = 0, R_{i2} (1) = 1, R_{i2} (0) = 1]
\end{equation*}
where the right hand side is the DID among always respondents. Our goal here is to bound this quantity using the observed data:
\begin{align*}
    f_{1}(y) &\equiv \Pr(Y_{i2} - Y_{i1} \le y \mid D_{i} = 1, R_{i2}  = 1) \\
    &= \frac{\pi_{11}(1)}{\pi_{11}(1) + \pi_{10}(1)} \Pr(Y_{i2} - Y_{i1} \le y \mid D_{i} = 1, R_{i2} (1) = 1, R_{i2} (0) = 1) \\
    &\qquad+ \frac{\pi_{10}(1)}{\pi_{11}(1) + \pi_{10}(1)} \Pr(Y_{i2} - Y_{i1} \le y \mid D_{i} = 1, R_{i2} (1) = 1, R_{i2} (0) = 0) 
\end{align*}
Suppose $\frac{\pi_{11}(1)}{\pi_{11}(1) + \pi_{10}(1)} = p/100$.
Following `trimming bounds' approach of \cite{zhang2003} and \cite{lee2009}, we can get the lower bound of $\E[Y_{i2} - Y_{i1} \mid D_{i} = 1, R_{i2} (1) = 1, R_{i2} (0) = 1]$ by `trimming' lower $p\%$ of treated-respondent group and taking average. That is,
\begin{equation*}
    \E[Y_{i2} - Y_{i1} \mid D_{i} = 1, R_{i2} (1) = 1, R_{i2} (0) = 1] \ge \int_{-\infty}^{q^{\text{low}}_{1}(f_{1})} y \frac{f_{1}(y)}{\int_{-\infty}^{q^{\text{low}}_{1}(f_{1})} f_{1}(y) dy} dy
\end{equation*}
where $q^{\text{low}}_{1}(f_{1})$ is the $\frac{\pi_{10}(1)}{\pi_{11}(1) + \pi_{10}(1)}$ quantile of $f_{1}(\cdot)$. Similarly, we can get the upper bound as follows:
\begin{equation*}
    \E[Y_{i2} - Y_{i1} \mid D_{i} = 1, R_{i2} (1) = 1, R_{i2} (0) = 1] \le \int_{q^{\text{high}}_{1}(f_{1})}^{\infty} y \frac{f_{1}(y)}{\int_{q^{\text{high}}_{1}(f_{1})}^{\infty} f_{1}(y) dy} dy
\end{equation*}
where $q^{\text{high}}_{1}(f_{1})$ is the $\frac{\pi_{11}(1)}{\pi_{11}(1) + \pi_{10}(1)}$ quantile of $f_{1}(\cdot)$.

\begin{figure}[ht!]
    \centering
    \includegraphics[width = \linewidth]{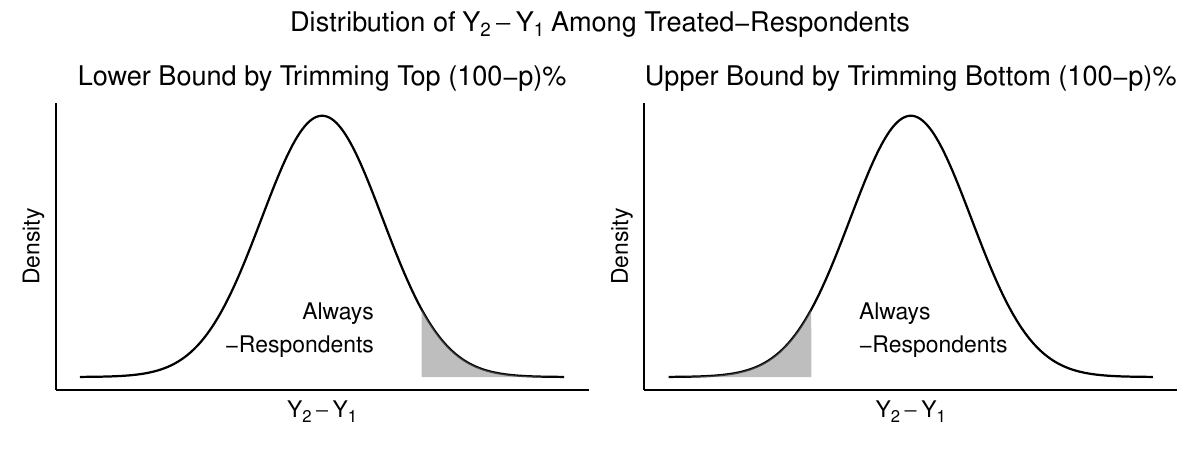}
    \caption{Illustration of the trimming bounds approach for the ATT-AR. The un-shaded area represents the lower and upper bounds of $\E[Y_{i2} - Y_{i1} \mid D_{i} = 1, R_{i2} (1) = 1, R_{i2} (0) = 1]$.}
\end{figure}

Using the same strategy, we can bound $\E[Y_{i2} - Y_{i1} \mid D_{i} = 0, R_{i2} (1) = 1, R_{i2} (0) = 1]$:
\begin{align*}
    \E[Y_{i2} - Y_{i1} \mid D_{i} = 0, R_{i2} (1) = 1, R_{i2} (0) = 1] &\ge \int_{-\infty}^{q^{\text{low}}_{0}(f_{0})} y \frac{f_{0}(y)}{\int_{-\infty}^{q^{\text{low}}_{0}(f_{0})} f_{0}(y) dy} dy \\
    \E[Y_{i2} - Y_{i1} \mid D_{i} = 0, R_{i2} (1) = 1, R_{i2} (0) = 1] &\le \int_{q^{\text{high}}_{0}(f_{0})}^{\infty} y \frac{f_{0}(y)}{\int_{q^{\text{high}}_{0}(f_{0})}^{\infty} f_{0}(y) dy} dy
\end{align*}
where $f_{0} \equiv \Pr(Y_{i2} - Y_{i1} \le y \mid D_{i} = 0, R_{i2}  = 1)$, $q^{\text{low}}_{0}(f_{0})$ is the $\frac{\pi_{01}(0)}{\pi_{11}(0) + \pi_{01}(0)}$ quantile of $f_{0}(\cdot)$, $q^{\text{high}}_{0}(f_{0})$ is the $\frac{\pi_{11}(0)}{\pi_{11}(0) + \pi_{01}(0)}$ quantile of $f_{0}(\cdot)$.
Consequently, our remaining task involves identifying the proportion of principal strata within the treated group. This analysis diverges from the standard principal strata analysis, which typically focuses on randomized experiments. Here, we are delving into observational studies and employing a DID design instead of relying on the assumption of ignorability (refer to \citealp{wang2017}). In this paper, we introduce the parallel trends assumption in the missingness of outcomes as a strategy to address the challenges associated with identifying the proportion of principal strata in the treated group.

\paragraph{Principal strata proportion.}
In this paper, we consider two approaches for identifying the principal strata proportion within the treated group: one that incorporates a monotonicity assumption and another that does not. We first start with the method that assumes monotonicity alongside parallel trends in the response indicator.

\begin{namedassumption}{a:monotonicity}{Monotonicity}{    
    \begin{equation*}
        R_{i2}(1) \geq R_{i2}(0), \quad \forall i
    \end{equation*}}
\end{namedassumption}
The monotonicity assumption suggests that the treatment effect on the response indicator for any individual is always non-negative. This assumption is reasonable in certain contexts. For example, consider our first motivating study where we are interested in the effect of aid on the public perception toward the government. If this aid positively affects individuals' perceptions, it is plausible to expect a higher response rate in the post-treatment survey under treatment compared to the control. This is based on the premise that individuals with negative perceptions are more hesitant to participate in the survey. Thus, a positive treatment effect on perception would likely increase the likelihood of response, reflecting the monotonicity stated above.
With this assumption, we have 
\begin{align*}
    \Pr(R_{i2} (1) = 1, R_{i2} (0) = 1 \mid D_{i} = 1) &= \Pr(R_{i2} (0) = 1 \mid D_{i} = 1) \\
    \Pr(R_{i2} (1) = 1, R_{i2} (0) = 1 \mid D_{i} = 0) &= \Pr(R_{i2} (0) = 1 \mid D_{i} = 0) = \Pr(R_{i2}  = 1 \mid D_{i} = 0)
\end{align*}
Thus, we only need to identify $\Pr(R_{i2} (0) = 1 \mid D_{i} = 1)$ for always respondents in the treated group. To this end, we introduce the following assumption.

\begin{namedassumption}{a:pt-missing}{Parallel trends of missingness}{
    \begin{equation*}
        \E[R_{i2}(0) - R_{i1}(0) \mid D_{i} = 1] = \E[R_{i2}(0) - R_{i1}(0) \mid D_{i} = 0]
    \end{equation*}}
\end{namedassumption}

With this assumption, we can identify $\Pr(R_{i2} (0) = 1 \mid D_i = 1) = \Pr(R_{i2} = 1 \mid D_{i} = 0) - \Pr(R_{i1} = 1 \mid D_{i} = 0) + \Pr(R_{i1}  = 1 \mid D_{i} = 1)$.

\begin{namedproposition}{prop:monotonicity}{Identification of principal strata proportion with monotonicity}{
    Under \Cref{a:monotonicity} and \ref{a:pt-missing}, we can identify the following:
    \begin{align*}
        \pi_{11}(1) &= \Pr(R_{i2}  = 1 \mid D_{i} = 0) - \Pr(R_{i1}  = 1 \mid D_{i} = 0) \\
        &\qquad + \Pr(R_{i1}  = 1 \mid D_{i} = 1)\\
        \pi_{10}(1) &= \Pr(R_{i2}  = 1 \mid D_i = 1) - \Pr(R_{i2}  = 1 \mid D_{i} = 0) \\
        &\qquad - \{\Pr(R_{i1}  = 1 \mid D_{i} = 1) - \Pr(R_{i1}  = 1 \mid D_{i} = 0)\} \\
        \pi_{11}(0) &= \Pr(R_{i2}  = 1 \mid D_{i} = 0) \\
        \pi_{01}(0) &= 0
    \end{align*}}
\end{namedproposition}

In certain scenarios, the monotonicity assumption might not be applicable. Take, for instance, our second motivating example where the treatment effect—altering cues may not uniformly result in an increased response rate in the post-treatment survey. In such instances, one may consider the following assumption.

\begin{namedassumption}{a:homogeneous}{Equivalence of ATT and ATC on missingness}{
    \begin{equation*}
        \E[R_{i2} (1) - R_{i2} (0) \mid D_{i} = 1] = \E[R_{i2} (1) - R_{i2} (0) \mid D_{i} = 0]
    \end{equation*}}
\end{namedassumption}

\begin{namedproposition}{prop:bound}{Identification of counterfactual response without monotonicity}{
    Under \Cref{a:pt-missing} and \ref{a:homogeneous} we have:
    \begin{align*}
        \Pr(R_{i2}(0) = 1 \mid D_i = 1) &= \Pr(R_{i2} = 1 \mid D_{i} = 0) - \Pr(R_{i1} = 1 \mid D_{i} = 0) + \Pr(R_{i1} = 1 \mid D_{i} = 1) \\
        \Pr(R_{i2}(1) = 1 \mid D_i = 0) &= \Pr(R_{i2} = 1 \mid D_{i} = 1) - \Pr(R_{i1} = 1 \mid D_{i} = 1) + \Pr(R_{i1} = 1 \mid D_{i} = 0)
    \end{align*}}
\end{namedproposition}

\paragraph{Partial identification of ATT-AR.}
Combining these identification results, we can bound ATT-AR.
\begin{namedtheorem}{thm:bound-attar}{Partial identification of ATT-AR}{
Under \namedCref{a:pt-principal} and \namedCref{a:pt-missing}, ATT-AR is partially identified:
\begin{enumerate}
    \item With \namedCref{a:monotonicity},
    \begin{equation*}
        \lb_{y_{2}-y_{1},1} - \int_{-\infty}^{\infty}y f_{0}(y)dy \le \text{ATT-AR} \le \ub_{y_{2}-y_{1},1} - \int_{-\infty}^{\infty}y f_{0}(y)dy
    \end{equation*}
    \item Without \namedCref{a:monotonicity}, but instead with \namedCref{a:homogeneous}
    \begin{equation*}
        \lb_{y_{2}-y_{1},1} - \ub_{y_{2}-y_{1},0} \le \text{ATT-AR} \le \ub_{y_{2}-y_{1},1} - \lb_{y_{2}-y_{1},0}
    \end{equation*}
\end{enumerate}
    where $\lb_{y_{2}-y_{1},d}$ and $\ub_{y_{2}-y_{1},d}$ are the lower and upper bounds of $\E[Y_{i2} - Y_{i1} \mid D_{i} = d, R_{i2}(1) = 1, R_{i2}(0) = 1]$.
    \begin{align*}
        &\lb_{y_{2}-y_{1},d} = \int_{-\infty}^{q^{\text{low}}_{d}(f_{d})} y \frac{f_{d}(y)}{\int_{-\infty}^{q^{\text{low}}_{d}(f_{d})} f_{d}(y) dy} dy \\
        &\ub_{y_{2}-y_{1},d} = \int_{q^{\text{high}}_{d}(f_{d})}^{\infty} y \frac{f_{d}(y)}{\int_{q^{\text{high}}_{d}(f_{d})}^{\infty} f_{d}(y) dy} dy
    \end{align*}
    where $f_{d}(y) \equiv \Pr(Y_{i2}-Y_{i1} \le y \mid D_{i} = d, R_{i2} = 1)$, $q^{\text{low}}_{d}(f_{d})$ is bottom $\frac{\pi_{11}(d)}{\Pr(R_{i2} = 1 \mid D_i = d)}$ quantile of $f_{d}(\cdot)$ and $q^{\text{high}}_{d}(f_{d})$ is top $\frac{\pi_{11}(d)}{\Pr(R_{i2} = 1 \mid D_i = d)}$ quantile of $f_{d}(\cdot)$.
}
\end{namedtheorem}

As noted earlier, the results in \Cref{thm:bound-attar} provide bounds on the ATT-AR, accommodating both the dependence between treatment selection and principal strata, as well as heterogeneous treatment effects across principal strata. This approach is particularly useful when the missingness rate is moderate, raising concerns about potential bias in the ATT while the proportion of always-respondents remains substantial. 

\section{Empirical Application} \label{sec:empirical}

In this section, I revisit the study by \cite{sexton2023} to illustrate the proposed method. 
The paper examines the impact of small development aid projects on public perception and attitudes toward the government. The study employs a DID design, aggregating data at the village cluster level to address inferential challenges (``spatial spillovers and incorrect standard errors''). The aim here is to demonstrate the application of the proposed method rather than to replicate the original study in its entirety. Therefore, for the purpose of this illustration, I simplify the data structure and employ a canonical two-period DID estimator.\footnote{The results presented in this section are provided for illustrative purposes only and should not be interpreted as supporting substantive conclusions.}

\begin{table}[ht!]
    \centering
    \begin{tabular}{lrrrrrrll}
        \toprule
        \multicolumn{1}{c}{ } & \multicolumn{2}{c}{$\pi_{11}(1)$} & \multicolumn{2}{c}{$\pi_{11}(0)$} & \multicolumn{2}{c}{ATT-AR} & \multicolumn{1}{c}{ } \\
        \cmidrule(l{2pt}r{2pt}){2-3} \cmidrule(l{2pt}r{2pt}){4-5} \cmidrule(l{2pt}r{2pt}){6-7}
        Outcome & LB & UB & LB & UB & LB & UB & DID$^{1}$ & DID$^{2}$\\
        \midrule
        Afghanistan Right Direction? & 0.83 & 0.92 & 0.84 & 0.92 & -0.24 & 0.15 & 0.05 (0.04) & -0.04\\
        Confidence in President & 0.97 & 0.98 & 0.97 & 0.98 & 0.03 & 0.09 & -0.18 (0.13) & 0.05\\
        National Gov. Good Job & 0.99 & 0.99 & 0.98 & 0.99 & 0.03 & 0.05 & -0.2 (0.12) & 0.04\\
        Local Gov. Confidence & 0.00 & 0.34 & 0.72 & 0.85 & -1 & 1 & -0.29 (0.12) & -0.04\\
        Sympathy for Insurgents & 0.92 & 0.94 & 0.97 & 0.97 & -0.02 & 0.10 & 0.11 (0.04) & 0.04\\
        \bottomrule
        \end{tabular}
    \caption{Estimates of Bounds of ATT-AR following Theorem~\ref{thm:bound-attar}. LB = lower bound, UB = upper bound, $\pi_{11}(1)$ = proportion of always-respondents in the treated group, $\pi_{11}(0)$ = proportion of always-respondents in the control group, DID$^{1}$ = the DID estimates (clustered by village) from the original study with standard errors in parentheses, DID$^{2}$ = two-period DID estimates.}
    \label{tab:result2}
\end{table}

Table~\ref{tab:result2} shows the partial identification of ATT-AR for the five outcome variables considered in the original study. The table presents the lower and upper bounds of ATT-AR, along with the DID estimates. There are two main takeaways here: across the outcome variables, except for ``Local Government Confidence,'' the proportion of always-respondents is about $0.8$ to $0.9$ in both treatment and control groups. This result is valuable for researchers as ATT-AR provides a useful reference for ATT with the correction of bias, given a large proportion of this latent subgroup. Particularly for ``Confidence in President'' and ``National Government Good Job,'' where the missingness ratio was relatively small (see Table~\ref{tab:sexton_missing}), the proportion of always-respondents is close to $1$. In contrast, for ``Local Government Confidence,'' the proportion of always-respondents in the control group is about $0.8$, while it is close to $0$ in the treatment group. This is intuitive, given the fact that the missingness rate is about $70\%$ and $65\%$ for the treated group in pre- and post-treatment surveys, respectively. Secondly, the bounds of ATT-AR are comparatively tight, except for ``Local Government Confidence,'' where the bounds range from $-1$ to $1$, given the large gap between the treated and control groups.

\begin{table}[ht!]
    \centering
    \begin{tabular}{llll}
    \toprule
    Outcome & Bias Corrected DID & DID$^{1}$ & DID$^{2}$\\
    \midrule
    Confidence in President & 0.57 & -0.18 (0.13) & -0.04\\
    National Gov. Good Job & 0.24 & -0.2 (0.12) & 0.00\\
    Sympathy for Insurgents & 0.94 & 0.11 (0.04) & -0.05\\
    \bottomrule
    \end{tabular}
    \caption{Estimates of ATT following Theorem~\ref{thm:att-iv}. Bias Corrected DID = estimates using response indicator of ``Employment opportunity (self-reported)'' from 2016 (pre-treatment) wave as IV, DID$^{1}$ = the DID estimates (clustered by village) from the original study with standard errors in parentheses, DID$^{2}$ = two-period DID estimates.}
    \label{tab:result1}
\end{table}

Table~\ref{tab:result1} presents the ATT estimates obtained using the IV method outlined in Theorem~\ref{thm:att-iv}. These estimates were derived using the self-reported ``Employment opportunity'' response indicator from the 2016 (pre-treatment) wave as an instrumental variable. The findings indicate that the difference between the bias-corrected DID estimates and the standard DID estimates is proportional to the missingness ratio observed in the post-treatment survey.
% \footnote{This result is provided for illustrative purposes only. Due to the small proportion of the ${\widetilde{R}_{i} = 0}$ group, I intend to explore additional auxiliary variables as IVs in future investigations.}

\section{Concluding Remarks}

Missingness in panel data is a prevalent issue that has not received sufficient attention in the context of DID studies. In this paper, I address the challenges posed by missing data in DID analyses by exploring different variants of the parallel trends assumption and additional sources of information, such as the trend of missingness rates. Employing the principal strata framework, I identify two primary challenges when outcomes are Missing Not At Random (MNAR): (1) the potential dependence of treatment selection on principal strata and (2) heterogeneous effects across principal strata. To overcome these challenges, I propose two alternative strategies with weaker assumptions: (1) partial identification of the ATT for always-respondents and (2) an IV method that employs baseline missingness indicators as instruments.

The paper also suggests several avenues for extending these methods. For example, the nonparametric identification of ATT-AR could be achieved by adopting the assumptions from \cite{wang2017}. Additionally, integrating the principal strata framework into the IV approach could provide a way to account for the dependency between treatment selection and principal strata. Exploring partial identification through the bracketing relationship among multiple auxiliary variables, as inspired by \cite{ye2023}, offers another potential extension. Moreover, generalizing the approach to accommodate staggered adoption and multiple treatment groups represents a promising direction for future research. Lastly, addressing missingness in covariates within DID studies is also worth investigating (see Appendix~\ref{app:pi} for a relevant discussion).

\clearpage
\bibliographystyle{apsr}
\bibliography{did}

%%%%%%%%%%%%%%%%%%%%%%%%%%%%%%%%%
\clearpage
\appendix
\section{Proofs of Section~\ref{sec:standard}} \label{app:standard}
\subsection{Proof of Proposition~\ref{prop:cc-att}} \label{app:cc-att}
\begin{proof}
    Under consistency assumption and parallel trends assumption, we have
    \begin{align*}
        &\E[Y_{i2}(1) - Y_{i2}(0) \mid D_{i} = 1] \\
        &= \E[Y_{i2} - Y_{i1} \mid D_{i} = 1] - \E[Y_{i2} - Y_{i1} \mid D_{i} = 0] \\
        &= \sum_{r = 0}^{1} \E[Y_{i2} - Y_{i1} \mid D_{i} = 1, R_{i} = r] \Pr(R_{i} = r \mid D_{i} = 1) \\
        &\quad - \sum_{r = 0}^{1} \E[Y_{i2} - Y_{i1} \mid D_{i} = 0, R_{i} = r] \Pr(R_{i} = r \mid D_{i} = 0) \\
        &= \sum_{r = 0}^{1} \E[Y_{i2} - Y_{i1} \mid D_{i} = 1, R_{i} = 1] \Pr(R_{i} = r \mid D_{i} = 1) \\
        &\quad - \sum_{r = 0}^{1} \E[Y_{i2} - Y_{i1} \mid D_{i} = 0, R_{i} = 1] \Pr(R_{i} = r \mid D_{i} = 0) \\
        &= \E[Y_{i2} - Y_{i1} \mid D_{i} = 1, R_{i} = 1] - \E[Y_{i2} - Y_{i1} \mid D_{i} = 0, R_{i} = 1]
    \end{align*}
    where the second equality uses the law of total expectation and the third equality uses \Cref{a:pt-observed}.
\end{proof}
\subsection{Proof of Proposition~\ref{prop:decomp-att}} \label{app:decomp-att}
\begin{proof}
    By the law of total expectation,
\begin{align*}
    &\E[Y_{i2}(1) - Y_{i2}(0) \mid D_{i} = 1] \\
    &= \E[Y_{i2}(1) - Y_{i2}(0) \mid D_{i} = 1, R_{i} = 1] \Pr(R_{i} = 1 \mid D_{i} = 1) \\
    &+ \E[Y_{i2}(1) - Y_{i2}(0) \mid D_{i} = 1, R_{i} = 0] \Pr(R_{i} = 0 \mid D_{i} = 1) 
\end{align*}
Introducing the pre-treatment period outcome $Y_{i1}$, we can rewrite the above as
\begin{align*}
    &= \E[Y_{i2}(1) - Y_{i1} \mid D_{i} = 1, R_{i} = 1] \Pr(R_{i} = 1 \mid D_{i} = 1) \\
    &+ \E[Y_{i2}(1) - Y_{i1} \mid D_{i} = 1, R_{i} = 0] \Pr(R_{i} = 0 \mid D_{i} = 1) \\
    &- \E[Y_{i2}(0) - Y_{i1} \mid D_{i} = 1, R_{i} = 1] \Pr(R_{i} = 1 \mid D_{i} = 1) \\
    &- \E[Y_{i2}(0) - Y_{i1} \mid D_{i} = 1, R_{i} = 0] \Pr(R_{i} = 0 \mid D_{i} = 1)
\end{align*}
By the consistency assumption, we have
\begin{align}
    &= \E[Y_{i2} - Y_{i1} \mid D_{i} = 1, R_{i} = 1] \Pr(R_{i} = 1 \mid D_{i} = 1) \label{eq:ident_start}\\
    &+ \underbrace{\E[\textcolor{indianred}{Y_{i2}} - Y_{i1} \mid D_{i} = 1, R_{i} = 0] \Pr(R_{i} = 0 \mid D_{i} = 1)}_{\text{Part 2}} \\
    &- \underbrace{\E[Y_{i2}(0) - Y_{i1} \mid D_{i} = 1, R_{i} = 1] \Pr(R_{i} = 1 \mid D_{i} = 1)}_{\text{Part 1}} \\
    &- \underbrace{\E[Y_{i2}(0) - Y_{i1} \mid D_{i} = 1, R_{i} = 0] \Pr(R_{i} = 0 \mid D_{i} = 1)}_{\text{Part 1}} \label{eq:ident_end}
\end{align}
where the term in red is not identified due to the missingness.
\paragraph{Part 1} 
Again, by the law of total expectation and the principal strata parallel trends assumption,
\begin{align}
    &\E[Y_{i2}(0) - Y_{i1} \mid D_{i} = 1, R_{i} = 1] \Pr(R_{i} = 1 \mid D_{i} = 1) \nonumber\\
    &\quad+ \E[Y_{i2}(0) - Y_{i1} \mid D_{i} = 1, R_{i} = 0] \Pr(R_{i} = 0 \mid D_{i} = 1) \nonumber\\
    &= \E[Y_{i2}(0) - Y_{i1} \mid D_{i} = 0, \textcolor{dodgerblue}{R_{i}(1) = 1}, R_{i}(0) = 1] \Pr(R_{i}(1) = 1, \textcolor{dodgerblue}{R_{i}(0) = 1} \mid D_{i} = 1) \label{eq:mist_start}\\
    &\quad+ \E[Y_{i2}(0) - Y_{i1} \mid D_{i} = 0, \textcolor{dodgerblue}{R_{i}(1) = 1}, R_{i}(0) = 0] \Pr(R_{i}(1) = 1, \textcolor{dodgerblue}{R_{i}(0) = 0} \mid D_{i} = 1) \\
    &\quad+ \E[Y_{i2}(0) - Y_{i1} \mid D_{i} = 0, R_{i}(1) = 0, R_{i}(0) = 1] \Pr(R_{i}(1) = 0, R_{i}(0) = 1 \mid D_{i} = 1) \\
    &\quad+ \E[Y_{i2}(0) - Y_{i1} \mid D_{i} = 0, R_{i}(1) = 0, R_{i}(0) = 0] \Pr(R_{i}(1) = 0, R_{i}(0) = 0 \mid D_{i} = 1) \label{eq:mist_end}
\end{align}
We can further simplify this with two different strategies:
\begin{itemize}
    \item Assume missing independent of selection into treatment (See Remark~\ref{r:mist-ind}).
    \item Further extend the parallel trend assumption (e.g., parallel trend across principal strata; See Remark~\ref{r:mean-ind}).
\end{itemize}
\paragraph{Part 2} Using the law of total expectation and the principal strata parallel trends assumption,
\begin{align*}
    &\E[Y_{i2} - Y_{i1} \mid D_{i} = 1, R_{i} = 0] \Pr(R_{i} = 0 \mid D_{i} = 1) \\
    &=\E[\textcolor{indianred}{Y_{i2}}(1) - Y_{i2}(0) \mid D_{i} = 1, R_{i}(1) = 0, R_{i}(0) = 0] \Pr(R_{i}(1) = 0, R_{i}(0) = 0 \mid D_{i} = 1) \\
    &\quad+ \E[\textcolor{indianred}{Y_{i2}}(1) - Y_{i2}(0) \mid D_{i} = 1, R_{i}(1) = 0, R_{i}(0) = 1] \Pr(R_{i}(1) = 0, R_{i}(0) = 1 \mid D_{i} = 1) \\
    &\quad+ \E[Y_{i2}(0) - Y_{i1} \mid D_{i} = 0, R_{i}(1) = 0, R_{i}(0) = 1] \Pr(R_{i}(1) = 0, R_{i}(0) = 1 \mid D_{i} = 1) \\
    &\quad+ \E[Y_{i2}(0) - Y_{i1} \mid D_{i} = 0, R_{i}(1) = 0, R_{i}(0) = 0] \Pr(R_{i}(1) = 0, R_{i}(0) = 0 \mid D_{i} = 1)
\end{align*}
To further simplify this, it requires assumptions on the effect heterogeniety across principal strata.
\end{proof}

\subsection{Remarks on Section~\ref{sec:standard}} \label{app:standard-remark}

\begin{namedremark}{r:mist-ind}{Identification attempt under missing independent of selection into treatment}{
    If we assume $S_{i} \indep D_{i}$ (missing independent of selection into treatment), 
    the quantity can be further simplified as
    \begin{align*}
        &\E[Y_{i2}(1) - Y_{i2}(0) \mid D_{i} = 1]\\
        &= \E[Y_{i2} - Y_{i1} \mid D_{i} = 1, R_{i} = 1] \Pr(R_{i} = 1 \mid D_{i} = 1) \\
        &+ \E[\textcolor{indianred}{Y_{i2}} - Y_{i1} \mid D_{i} = 1, R_{i} = 0] \Pr(R_{i} = 0 \mid D_{i} = 1) \\
        &- \E[Y_{i2} - Y_{i1} \mid D_{i} = 0, R_{i} = 1] \Pr(R_{i} = 1 \mid D_{i} = 0) \\
        &- \E[\textcolor{indianred}{Y_{i2}} - Y_{i1} \mid D_{i} = 0, R_{i} = 0] \Pr(R_{i} = 0 \mid D_{i} = 0)
    \end{align*}
    where the terms in red are not identified due to the missingness. }
    % This can be shown by substituting $\Pr(S_{i} = s \mid D_{i} = 1)$  with $\Pr(S_{i} = s \mid D_{i} = 0)$ in Eq. \eqref{eq:mist_start} -- \eqref{eq:mist_end}.
\end{namedremark}
\begin{proof}
    From Eq. \eqref{eq:ident_start} -- \eqref{eq:ident_end}, we have
    \begin{align*}
        &\E[Y_{i2}(1) - Y_{i2}(0) \mid D_{i} = 1]\\
        &= \E[Y_{i2} - Y_{i1} \mid D_{i} = 1, R_{i} = 1] \Pr(R_{i} = 1 \mid D_{i} = 1) \\
        &\quad+ \E[\textcolor{indianred}{Y_{i2}} - Y_{i1} \mid D_{i} = 1, R_{i} = 0] \Pr(R_{i} = 0 \mid D_{i} = 1) \\
        &\quad- \E[Y_{i2}(0) - Y_{i1} \mid D_{i} = 1, R_{i} = 1] \Pr(R_{i} = 1 \mid D_{i} = 1) \\
        &\quad- \E[Y_{i2}(0) - Y_{i1} \mid D_{i} = 1, R_{i} = 0] \Pr(R_{i} = 0 \mid D_{i} = 1)
    \end{align*}
    where the last two lines are equal to Eq. \eqref{eq:mist_start} -- \eqref{eq:mist_end} respectively.
    Substituting $\Pr(S_{i} = s \mid D_{i} = 1)$  with $\Pr(S_{i} = s \mid D_{i} = 0)$, we have
    \begin{align*}
        & \E[Y_{i2}(0) - Y_{i1} \mid D_{i} = 0, R_{i}(1) = 1, R_{i}(0) = 1] \Pr(R_{i}(1) = 1, R_{i}(0) = 1 \mid D_{i} = 0) \\
        &\quad+ \E[Y_{i2}(0) - Y_{i1} \mid D_{i} = 0, R_{i}(1) = 1, R_{i}(0) = 0] \Pr(R_{i}(1) = 1, R_{i}(0) = 0 \mid D_{i} = 0) \\
        &\quad+ \E[Y_{i2}(0) - Y_{i1} \mid D_{i} = 0, R_{i}(1) = 0, R_{i}(0) = 1] \Pr(R_{i}(1) = 0, R_{i}(0) = 1 \mid D_{i} = 0) \\
        &\quad+ \E[Y_{i2}(0) - Y_{i1} \mid D_{i} = 0, R_{i}(1) = 0, R_{i}(0) = 0] \Pr(R_{i}(1) = 0, R_{i}(0) = 0 \mid D_{i} = 0) \\
        &= \E[Y_{i2}(0) - Y_{i1} \mid D_{i} = 0, R_{i}(0) = 1] \Pr(R_{i}(0) = 1 \mid D_{i} = 0) \\
        &\quad+ \E[Y_{i2}(0) - Y_{i1} \mid D_{i} = 0, R_{i}(0) = 0] \Pr(R_{i}(0) = 0 \mid D_{i} = 0) \\
        &= \E[Y_{i2} - Y_{i1} \mid D_{i} = 0, R_{i} = 1] \Pr(R_{i} = 1 \mid D_{i} = 0) \\
        &\quad+ \E[\textcolor{indianred}{Y_{i2}} - Y_{i1} \mid D_{i} = 0, R_{i} = 0] \Pr(R_{i} = 0 \mid D_{i} = 0)
    \end{align*}
    Putting these together, we have
    \begin{align*}
        &\E[Y_{i2}(1) - Y_{i2}(0) \mid D_{i} = 1]\\
        &= \E[Y_{i2} - Y_{i1} \mid D_{i} = 1, R_{i} = 1] \Pr(R_{i} = 1 \mid D_{i} = 1) \\
        &\quad+ \E[\textcolor{indianred}{Y_{i2}} - Y_{i1} \mid D_{i} = 1, R_{i} = 0] \Pr(R_{i} = 0 \mid D_{i} = 1) \\
        &\quad- \E[Y_{i2} - Y_{i1} \mid D_{i} = 0, R_{i} = 1] \Pr(R_{i} = 1 \mid D_{i} = 0) \\
        &\quad- \E[\textcolor{indianred}{Y_{i2}} - Y_{i1} \mid D_{i} = 0, R_{i} = 0] \Pr(R_{i} = 0 \mid D_{i} = 0)
    \end{align*}
    This implies that missing independent of selection into treatment is not sufficient for identification.
\end{proof}

\begin{namedremark}{r:mean-ind}{Identification attempt under parallel trends across principal strata}{    
    Assume the following extended parallel trends assumption:
    \begin{equation*}
        \E[Y_{i2}(0) - Y_{i1} \mid D_{i} = d, S_{i} = s] = \E[Y_{i2}(0) - Y_{i1} \mid D_{i} = d^\prime, S_{i} = s^\prime]
    \end{equation*}
    for $d, d^\prime \in \{0,1\}$ and $s, s^\prime \in \{(0,0), (0,1), (1,0), (1,1)\}$.
    Somewhat trivial result is that, we can identify $\E[Y_{i2}(0) - Y_{i1} \mid D_{i} = 1]$ by $\E[Y_{i2} - Y_{i1} \mid D_{i} = 0, R_{i} = 1]$. However, we cannot identify ATT without further assumptions such as homogeneous effect across principal strata.}
\end{namedremark}
\begin{proof}
    \begin{align*}
        & \E[Y_{i2}(0) - Y_{i1} \mid D_{i} = 1] \\
        &= \sum_{r_1, r_0}\E[Y_{i2}(0) - Y_{i1} \mid D_{i} = 1, R_{i}(1) = r_1, R_{i}(0) = r_0] \Pr(R_{i}(1) = r_i, R_{i}(0) = r_0 \mid D_{i} = 1) \\
        &= \sum_{r_1, r_0}\E[Y_{i2}(0) - Y_{i1} \mid D_{i} = 0, R_{i}(1) = r_1, R_{i}(0) = r_0] \Pr(R_{i}(1) = r_i, R_{i}(0) = r_0 \mid D_{i} = 1) \\
        &= \E[Y_{i2}(0) - Y_{i1} \mid D_{i} = 0, R_{i}(1) = r_1^\prime, R_{i}(0) = r_0^\prime] \sum_{r_1, r_0} \Pr(R_{i}(1) = r_i, R_{i}(0) = r_0 \mid D_{i} = 1) \\
        &= \E[Y_{i2}(0) - Y_{i1} \mid D_{i} = 0, R_{i}(1) = 1, R_{i} = 1] \\
        &= \E[Y_{i2}(0) - Y_{i1} \mid D_{i} = 0, R_{i}(1) = 1, R_{i} = 1] \Pr(R_{i}(1) = 1 \mid R_{i} = 1, D_{i} = 0) \\
        &\quad+ \E[Y_{i2}(0) - Y_{i1} \mid D_{i} = 0, R_{i}(1) = 1, R_{i} = 1] \Pr(R_{i}(1) = 0 \mid R_{i} = 1, D_{i} = 0) \\
        &= \E[Y_{i2}(0) - Y_{i1} \mid D_{i} = 0, R_{i}(1) = 1, R_{i} = 1] \Pr(R_{i}(1) = 1 \mid R_{i} = 1, D_{i} = 0) \\
        &\quad+ \E[Y_{i2}(0) - Y_{i1} \mid D_{i} = 0, R_{i}(1) = 0, R_{i} = 1] \Pr(R_{i}(1) = 0 \mid R_{i} = 1, D_{i} = 0) \\
        &= \E[Y_{i2}(0) - Y_{i1} \mid D_{i} = 0, R_{i} = 1] \\
        &= \E[Y_{i2} - Y_{i1} \mid D_{i} = 0, R_{i} = 1]
    \end{align*}
\end{proof}

%%%%%%%%%%%%%%%%%%%%%%%%%%%%%%%%%
\newpage
\section{Proofs of Section~\ref{sec:proposed}} \label{app:proposed}

\subsection{Proofs of Section~\ref{sec:proposed-att}} \label{app:proposed-att}

\paragraph{Setup.}
Suppose we have multiple waves of panel data where there exists missing data in the pre-treatment periods. From now on, we use $R_{it}$ instead of $R_{i}$ to denote the response indicator of the outcome variable at time $t$. That is, $R_{it} = 1$ if $Y_{it}$ is observed, and $0$ if it is missing, for $t = 1,2$.
Furthermore, we introduce an auxiliary variable, $W_{i}$ and its response indicators, $\widetilde{R}_{i}$.
The auxiliary variable can be constructed from other pre-treatment variables.
For instance, in the first motivating example, we may use other outcome variables (e.g. ``Afghanistan right direction?'', etc) at $2016$ wave as auxiliary variables. 
% In the second motivating example, we may use the outcome variable at wave $0$ and $1$ as auxiliary variables where the treatment took place in between wave $2$ and $3$. Table~\ref{tab:toy} illustrates a toy example of the observed data with auxiliary variables, where the missingness of the outcome and the auxiliary variables are shown as \texttt{NA} on the left table and the right table encodes the missinness using indicators.

\begin{table}[ht!] 
	\centering
        \begin{tabular}{c|cccc} \toprule
            ID & $W_{i}$ & $Y_{i1}$ & $D_i$ & $Y_{i2}$ \\ \midrule
            1 & 0 & 0 & 0 & 1 \\
            2 & 1 & \texttt{NA} & 0 & 2 \\
            3 & 1 & \texttt{NA} & 1 & 2 \\
            4 & \texttt{NA} & 2 & 0 & 3 \\
            \rowcolor{gray!30} 5 & 1 & 2 & 1 & 4 \\
            6 & 2 & 3 & 1 & \texttt{NA} \\
            \rowcolor{gray!30} 7 & \texttt{NA} & 0 & 1 & 1 \\
            8 & \texttt{NA} & 1 & 1 & \texttt{NA} \\
            9 & \texttt{NA} & 2 & 1 & \texttt{NA} \\
            % $\vdots$ & $\vdots$ & $\vdots$ & $\vdots$ & $\vdots$ \\ 
            \bottomrule
        \end{tabular}
        \hspace{2em}
        \begin{tabular}{c|cccc} \toprule
            ID & $\widetilde{R}_{i}$ & $R_{i1}$ & $D_i$ & $R_{i2}$ \\ \midrule
            1 & 1 & 1 & 0 & 1 \\
            2 & 1 & 0 & 0 & 1 \\
            3 & 1 & 0 & 1 & 1 \\
            4 & 0 & 1 & 0 & 1 \\
            \rowcolor{yellow!30} 5 & 1 & 1 & 1 & 1 \\
            \rowcolor{yellow!10} 6 & 1 & 1 & 1 & 0 \\
            \rowcolor{yellow!30} 7 & 0 & 1 & 1 & 1 \\
            \rowcolor{yellow!10} 8 & 0 & 1 & 1 & 0 \\
            \rowcolor{yellow!10} 9 & 0 & 1 & 1 & 0 \\
            % $\vdots$ & $\vdots$ & $\vdots$ & $\vdots$ & $\vdots$ \\ 
            \bottomrule
        \end{tabular}
	\caption{Toy Example of Observed Data with Three Waves.}
    \label{tab:toy}
\end{table}

Note that we now allow for the missingness in the pre-treatment periods, which may make DID estimator infeasible unless we assume further about the missingness mechanism of the pre-treatment outcome. 
Here, we assume an exclusion restriction type of assumption for the pre-treatment outcome missingness and time trend of outcome.

\begin{namedassumption}{a:pt_mar}{Exclusion restriction for pre-treatment outcome missingness}{
    The pre-treatment outcome missingness may affect the time trend of the outcome only through the treatment selection and the post-treatment outcome missingness.
    \begin{align*}
        Y_{i2} - Y_{i1} &\indep R_{i1} \mid D_{i}, R_{i2} 
    \end{align*}
}
\end{namedassumption}

One may consider this assumption as a variant of the missing at random (MAR) assumption for the pre-treatment outcome missingness, where the missingness of the pre-treatment outcome is independent of the missing values of the time trend of the outcome given other information including the treatment selection and the post-treatment outcome missingness. 
To be precise, this is not to say that the pre-treatment outcome is MAR, since the assumption is not about the missing values of $Y_{i1}$ itself but about the missingness of the time trend of the outcome $Y_{i2} - Y_{i1}$.
Based on this assumption, we introduce a set of assumptions for using the baseline response indicator as IV for the time trend of the outcome, conditioning on the treatment and the pre-treatment outcome missingness. 
Note that a weaker version of the assumption is required for the identification ($\E[Y_{i2} - Y_{i1} \mid D_{i} = d, R_{i2} = r, R_{i1} = 1] = \E[Y_{i2} - Y_{i1} \mid D_{i} = d, R_{i2} = r, R_{i1} = 0]$), yet we consider the stronger version for the sake of clarity.
Alternatively, we can also define ATT for the population who responded to the pre-treatment survey, i.e. $R_{i1} = 1$.

\begin{lemma} \label{lem:inst}
    Under \Cref{a:missing_iv} and \Cref{a:pt_mar}, we have
    \begin{align*}
        &\E[Y_{i2} - Y_{i1} \mid D_{i} = d, R_{i2} = 0] \\
        &=\E[Y_{i2} - Y_{i1} \mid D_{i} = d, R_{i1} = 1, R_{i2} = 1]\\
        &\quad+ \big\{ \E[Y_{i2} - Y_{i1} \mid D_{i} = d, \widetilde{R}_{i} = 1, R_{i1} = 1, R_{i2} = 1] - \E[Y_{i2} - Y_{i1} \mid D_{i} = d, \widetilde{R}_{i} = 0, R_{i1} = 1, R_{i2} = 1]\big\}\\
        &\qquad\times \big\{\Pr(R_{i2} = 0 \mid D_{i} = d, \widetilde{R}_{i} = 0, R_{i1} = 1) - \Pr(R_{i2} = 0 \mid D_{i} = d, \widetilde{R}_{i} = 1, R_{i1} = 1)\big\}^{-1}
    \end{align*}
    for $d = 0,1$.
\end{lemma}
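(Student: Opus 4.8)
The plan is to identify the unobserved nonrespondent trend $\E[Y_{i2} - Y_{i1} \mid D_i = d, R_{i2} = 0]$ by first reducing it to the \emph{observed} respondent trend through the selection bias, and then solving for that bias using the instrument $\widetilde{R}_i$. Fix $d$ and condition on $D_i = d$ throughout. The starting point is the definition $\delta_d = \E[Y_{i2} - Y_{i1} \mid D_i = d, R_{i2} = 1] - \E[Y_{i2} - Y_{i1} \mid D_i = d, R_{i2} = 0]$, which rearranges to $\E[Y_{i2} - Y_{i1} \mid D_i = d, R_{i2} = 0] = \E[Y_{i2} - Y_{i1} \mid D_i = d, R_{i2} = 1] - \delta_d$. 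By the exclusion restriction \Cref{a:pt_mar}, additionally conditioning on $R_{i1} = 1$ leaves the respondent trend unchanged, so $\E[Y_{i2} - Y_{i1} \mid D_i = d, R_{i2} = 1] = \E[Y_{i2} - Y_{i1} \mid D_i = d, R_{i1} = 1, R_{i2} = 1]$; this produces the first term on the right-hand side. It then remains only to show that $-\delta_d$ equals the displayed instrument ratio.

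The heart of the argument is a two-by-two accounting inside the $\{R_{i1} = 1\}$ subpopulation across the levels of $\widetilde{R}_i$. For $r \in \{0,1\}$ write $a_r = \E[Y_{i2} - Y_{i1} \mid D_i = d, \widetilde{R}_i = r, R_{i1} = 1, R_{i2} = 1]$ for the observed trend, $b_r = \E[Y_{i2} - Y_{i1} \mid D_i = d, \widetilde{R}_i = r, R_{i1} = 1, R_{i2} = 0]$ for the unobserved one, and $p_r = \Pr(R_{i2} = 0 \mid D_i = d, \widetilde{R}_i = r, R_{i1} = 1)$. The law of total expectation conditional on $\widetilde{R}_i = r$ gives $\E[Y_{i2} - Y_{i1} \mid D_i = d, \widetilde{R}_i = r, R_{i1} = 1] = a_r(1-p_r) + b_r p_r$. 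I then apply the two substantive conditions of \Cref{a:missing_iv} within this slice: parallel trends of the observed outcome forces the left-hand side to agree for $r=0$ and $r=1$, while bias homogeneity gives $a_r - b_r = \delta_d$. Substituting $b_r = a_r - \delta_d$ into $a_0(1-p_0) + b_0 p_0 = a_1(1-p_1) + b_1 p_1$ collapses it to $a_0 - \delta_d p_0 = a_1 - \delta_d p_1$, hence $\delta_d(p_0 - p_1) = a_0 - a_1$. The relevance condition of \Cref{a:missing_iv} guarantees $p_0 \neq p_1$, so dividing yields $\delta_d = (a_0 - a_1)/(p_0 - p_1)$, i.e. $-\delta_d = (a_1 - a_0)/(p_0 - p_1)$, which is exactly the second term. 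Combining with the first step gives the claimed identity.

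I expect the only genuinely delicate point to be the bookkeeping around the $R_{i1}$ conditioning rather than any hard analysis. The conditions of \Cref{a:missing_iv} were originally stated marginally over pre-treatment response, whereas both the instrument ratio and the first term in the statement are built from $\{R_{i1}=1\}$ quantities, so I must invoke \Cref{a:pt_mar} carefully in two roles: (i) to equate the marginal respondent trend with its $R_{i1}=1$ counterpart in the reduction step, and (ii) to justify that the bias computed inside the $\{R_{i1}=1\}$ slice coincides with the same $\delta_d$ entering the definition and the bias-homogeneity equation. Once that consistency is pinned down, everything else is the law of total expectation and elementary linear algebra in the two unknowns $\delta_d$ and the $b_r$, with the relevance condition used precisely once to keep the denominator $p_0 - p_1$ nonzero.
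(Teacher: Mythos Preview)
Your proposal is correct and follows essentially the same route as the paper: expand $\E[Y_{i2}-Y_{i1}\mid D_i=d,\widetilde R_i=r,R_{i1}=1]$ over $R_{i2}$, impose bias homogeneity and the parallel-trends-of-observed-outcome condition across the two levels of $\widetilde R_i$, solve the resulting linear system for $\delta_d$, and then use \Cref{a:pt_mar} to replace the marginal nonrespondent trend by its $R_{i1}=1$ counterpart. Your notation $a_r,b_r,p_r$ makes the algebra considerably cleaner than the paper's explicit expansions, and you are in fact more careful than the paper in flagging that the conditions of \Cref{a:missing_iv} are stated without $R_{i1}$ while the argument runs inside $\{R_{i1}=1\}$.
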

\begin{proof}
    \begin{align*}
        &\E[Y_{i2} - Y_{i1} \mid D_{i} = 1, \widetilde{R}_{i} = 1, R_{i1} = 1] \\
        &=\sum_{r= 0,1}\E[Y_{i2} - Y_{i1} \mid D_{i} = 1, \widetilde{R}_{i} = 1, R_{i1} = 1, R_{i2} = r] \Pr(R_{i2} = r \mid D_{i} = 1, \widetilde{R}_{i} = 1, R_{i1}  = 1) \\
        &=\sum_{r= 0,1}\E[Y_{i2} - Y_{i1} \mid D_{i} = 1, \widetilde{R}_{i} = 1, R_{i1} = 1, R_{i2} = r] \Pr(R_{i2} = r \mid D_{i} = 1, \widetilde{R}_{i} = 1, R_{i1}  = 1) \\
        &\qquad + 
        \E[Y_{i2} - Y_{i1} \mid D_{i} = 1, \widetilde{R}_{i} = 1, R_{i1} = 1, R_{i2} = 1]
        \Pr(R_{i2} = 0 \mid D_{i} = 1, \widetilde{R}_{i} = 1, R_{i1}  = 1)\\
        &\qquad -
        \E[Y_{i2} - Y_{i1} \mid D_{i} = 1, \widetilde{R}_{i} = 1, R_{i1} = 1, R_{i2} = 1]
        \Pr(R_{i2} = 0 \mid D_{i} = 1, \widetilde{R}_{i} = 1, R_{i1}  = 1)\\
        &=\E[Y_{i2} - Y_{i1} \mid D_{i} = d, \widetilde{R}_{i} = 1, R_{i1} = 1, R_{i2} = 1]\\ 
        &\qquad+\{\E[Y_{i2} - Y_{i1} \mid D_{i} = 1, \widetilde{R}_{i} = 1, R_{i1} = 1, R_{i2} = 0] - \E[Y_{i2} - Y_{i1} \mid D_{i} = 1, \widetilde{R}_{i} = 1, R_{i1} = 1, R_{i2} = 1]\}\\
        &\qquad\quad\times\Pr(R_{i2} = 0 \mid D_{i} = 1, \widetilde{R}_{i} = 1, R_{i1}  = 1)
    \end{align*}
    With similar steps, 
    \begin{align*}
        &\E[Y_{i2} - Y_{i1} \mid D_{i} = 1, \widetilde{R}_{i} = 0, R_{i1} = 1] \\
        &=\E[Y_{i2} - Y_{i1} \mid D_{i} = 1, \widetilde{R}_{i} = 0, R_{i1} = 1, R_{i2} = 1]\\ 
        &\qquad+\{\E[Y_{i2} - Y_{i1} \mid D_{i} = 1, \widetilde{R}_{i} = 0, R_{i1} = 1, R_{i2} = 0] - \E[Y_{i2} - Y_{i1} \mid D_{i} = 1, \widetilde{R}_{i} = 0, R_{i1} = 1, R_{i2} = 1]\}\\
        &\qquad\quad\times\Pr(R_{i2} = 0 \mid D_{i} = 1, \widetilde{R}_{i} = 0, R_{i1}  = 1) \\
        &=\E[Y_{i2} - Y_{i1} \mid D_{i} = 1, \widetilde{R}_{i} = 0, R_{i1} = 1, R_{i2} = 1]\\ 
        &\qquad+\{\E[Y_{i2} - Y_{i1} \mid D_{i} = 1, \widetilde{R}_{i} = 1, R_{i1} = 1, R_{i2} = 0] - \E[Y_{i2} - Y_{i1} \mid D_{i} = 1, \widetilde{R}_{i} = 1, R_{i1} = 1, R_{i2} = 1]\}\\
        &\qquad\quad\times\Pr(R_{i2} = 0 \mid D_{i} = 1, \widetilde{R}_{i} = 0, R_{i1}  = 1)
    \end{align*}
    By \textbf{parallel trends of observed outcome}, we have
    \begin{align*}
        \E[Y_{i2} - Y_{i1} \mid D_{i} = 1, \widetilde{R}_{i} = 1, R_{i1} = 1] =\E[Y_{i2} - Y_{i1} \mid D_{i} = 1, \widetilde{R}_{i} = 0, R_{i1} = 1]
    \end{align*}
    Thus, 
    \begin{align*}
        &\E[Y_{i2} - Y_{i1} \mid D_{i} = 1, \widetilde{R}_{i} = 1, R_{i1} = 1, R_{i2} = 1]\\ 
        &\qquad+\{\E[Y_{i2} - Y_{i1} \mid D_{i} = 1, \widetilde{R}_{i} = 1, R_{i1} = 1, R_{i2} = 0] - \E[Y_{i2} - Y_{i1} \mid D_{i} = 1, \widetilde{R}_{i} = 1, R_{i1} = 1, R_{i2} = 1]\}\\
        &\qquad\quad\times\Pr(R_{i2} = 0 \mid D_{i} = 1, \widetilde{R}_{i} = 1, R_{i1}  = 1)\\
        &=\E[Y_{i2} - Y_{i1} \mid D_{i} = 1, \widetilde{R}_{i} = 0, R_{i1} = 1, R_{i2} = 1]\\ 
        &\qquad+\{\E[Y_{i2} - Y_{i1} \mid D_{i} = 1, \widetilde{R}_{i} = 1, R_{i1} = 1, R_{i2} = 0] - \E[Y_{i2} - Y_{i1} \mid D_{i} = 1, \widetilde{R}_{i} = 1, R_{i1} = 1, R_{i2} = 1]\}\\
        &\qquad\quad\times\Pr(R_{i2} = 0 \mid D_{i} = 1, \widetilde{R}_{i} = 0, R_{i1}  = 1)
    \end{align*}
    which leads to
    \begin{align*}
        &\E[Y_{i2} - Y_{i1} \mid D_{i} = 1, \widetilde{R}_{i} = 1, R_{i1} = 1, R_{i2} = 0]\\
        &=\E[Y_{i2} - Y_{i1} \mid D_{i} = 1, \widetilde{R}_{i} = 1, R_{i1} = 1, R_{i2} = 1] \\
        &\quad+ \big\{ \E[Y_{i2} - Y_{i1} \mid D_{i} = 1, \widetilde{R}_{i} = 1, R_{i1} = 1, R_{i2} = 1] - \E[Y_{i2} - Y_{i1} \mid D_{i} = 1, \widetilde{R}_{i} = 0, R_{i1} = 1, R_{i2} = 1]\big\}\\
        &\qquad\times \big\{\Pr(R_{i2} = 0 \mid D_{i} = 1, \widetilde{R}_{i} = 0, R_{i1} = 1) - \Pr(R_{i2} = 0 \mid D_{i} = 1, \widetilde{R}_{i} = 1, R_{i1} = 1)\big\}^{-1}
    \end{align*}
    and 
    \begin{align*}
        &\E[Y_{i2} - Y_{i1} \mid D_{i} = 1, \widetilde{R}_{i} = 0, R_{i1} = 1, R_{i2} = 0]\\
        &=\E[Y_{i2} - Y_{i1} \mid D_{i} = 1, \widetilde{R}_{i} = 0, R_{i1} = 1, R_{i2} = 1] \\
        &\quad+ \big\{ \E[Y_{i2} - Y_{i1} \mid D_{i} = 1, \widetilde{R}_{i} = 1, R_{i1} = 1, R_{i2} = 1] - \E[Y_{i2} - Y_{i1} \mid D_{i} = 1, \widetilde{R}_{i} = 0, R_{i1} = 1, R_{i2} = 1]\big\}\\
        &\qquad\times \big\{\Pr(R_{i2} = 0 \mid D_{i} = 1, \widetilde{R}_{i} = 0, R_{i1} = 1) - \Pr(R_{i2} = 0 \mid D_{i} = 1, \widetilde{R}_{i} = 1, R_{i1} = 1)\big\}^{-1}
    \end{align*}
    By \textbf{bias homogeneity},
    \begin{align*}
        &\E[Y_{i2} - Y_{i1} \mid D_{i} = 1, R_{i1} = 1, R_{i2} = 1] - \E[Y_{i2} - Y_{i1} \mid D_{i} = 1, R_{i1} = 1, R_{i2} = 0] \\
        &=\E[Y_{i2} - Y_{i1} \mid D_{i} = 1, \widetilde{R}_{i} = 1, R_{i1} = 1, R_{i2} = 1] - \E[Y_{i2} - Y_{i1} \mid D_{i} = 1, \widetilde{R}_{i} = 1, R_{i1} = 1, R_{i2} = 0] \\
        &=\E[Y_{i2} - Y_{i1} \mid D_{i} = 1, \widetilde{R}_{i} = 0, R_{i1} = 1, R_{i2} = 1] - \E[Y_{i2} - Y_{i1} \mid D_{i} = 1, \widetilde{R}_{i} = 0, R_{i1} = 1, R_{i2} = 0] \\
        &= -\big\{ \E[Y_{i2} - Y_{i1} \mid D_{i} = 1, \widetilde{R}_{i} = 1, R_{i1} = 1, R_{i2} = 1] - \E[Y_{i2} - Y_{i1} \mid D_{i} = 1, \widetilde{R}_{i} = 0, R_{i1} = 1, R_{i2} = 1]\big\}\\
        &\qquad\times \big\{\Pr(R_{i2} = 0 \mid D_{i} = 1, \widetilde{R}_{i} = 0, R_{i1} = 1) - \Pr(R_{i2} = 0 \mid D_{i} = 1, \widetilde{R}_{i} = 1, R_{i1} = 1)\big\}^{-1}
    \end{align*}
    Plugging this into the following equation,
    \begin{align*}
        &\E[Y_{i2} - Y_{i1} \mid D_{i} = 1, R_{i2} = 0] \\
        &=\E[Y_{i2} - Y_{i1} \mid D_{i} = 1, R_{i1} = 1, R_{i2} = 0] \\
        % &=\E[Y_{i2} - Y_{i1} \mid D_{i} = 1, \widetilde{R}_{i} = 1, R_{i1} = 1, R_{i2} = 0]\Pr(\widetilde{R}_{i} = 1 \mid D_{i} = 1, R_{i1} = 1, R_{i2} = 0) \\
        % &\quad+\E[Y_{i2} - Y_{i1} \mid D_{i} = 1, \widetilde{R}_{i} = 0, R_{i1} = 1, R_{i2} = 0]\Pr(\widetilde{R}_{i} = 0 \mid D_{i} = 1, R_{i1} = 1, R_{i2} = 0) \\
        &=\E[Y_{i2} - Y_{i1} \mid D_{i} = 1, R_{i1} = 1, R_{i2} = 1]\\
        &\quad+ \big\{ \E[Y_{i2} - Y_{i1} \mid D_{i} = 1, \widetilde{R}_{i} = 1, R_{i1} = 1, R_{i2} = 1] - \E[Y_{i2} - Y_{i1} \mid D_{i} = 1, \widetilde{R}_{i} = 0, R_{i1} = 1, R_{i2} = 1]\big\}\\
        &\qquad\times \big\{\Pr(R_{i2} = 0 \mid D_{i} = 1, \widetilde{R}_{i} = 0, R_{i1} = 1) - \Pr(R_{i2} = 0 \mid D_{i} = 1, \widetilde{R}_{i} = 1, R_{i1} = 1)\big\}^{-1}
    \end{align*}
    Similarly, we can show the equation with $d = 0$.
\end{proof}

\subsection{Proofs of Section~\ref{sec:proposed-partial}} \label{app:proposed-partial}
\subsubsection{Proof of Proposition~\ref{prop:bound}}

\begin{proof}
    By \Cref{a:pt-missing}, we have
    \begin{equation*}
        \E[R_{i2}(0)\mid D_{i} = 1] = \E[R_{i2}(0) - R_{i1}(0) \mid D_{i} = 0] + \E[R_{i1}(0) \mid D_{i} = 1].
    \end{equation*}
    Since $R_{it}$ is binary, we have
    \begin{equation*}
        \Pr(R_{i2}(0) = 1 \mid D_{i} = 1) = \Pr(R_{i2} = 1 \mid D_{i} = 0) - \Pr(R_{i1} = 1 \mid D_{i} = 0) + \Pr(R_{i1} = 1 \mid D_{i} = 1).
    \end{equation*}
    By \Cref{a:homogeneous}, we have
    \begin{equation*}
        \E[R_{i2} (1) \mid D_{i} = 0] = \E[R_{i2} (1) - R_{i2} (0) \mid D_{i} = 1] + \E[R_{i2} (0) \mid D_{i} = 0]
    \end{equation*}
    which gives us
    \begin{equation*}
        \Pr(R_{i2} (1) = 1 \mid D_{i} = 0) = \Pr(R_{i2} = 1 \mid D_{i} = 1) - \Pr(R_{i2} (0) = 1 \mid D_{i} = 1) + \Pr(R_{i2} = 1 \mid D_{i} = 0).
    \end{equation*}
    By plugging in the previous result, we have
    \begin{equation*}
        \Pr(R_{i2} (1) = 1 \mid D_{i} = 0) = \Pr(R_{i2} = 1 \mid D_{i} = 1) - \Pr(R_{i1} = 1 \mid D_{i} = 1) + \Pr(R_{i1} = 1 \mid D_{i} = 0).
    \end{equation*}
\end{proof}

\subsubsection{Additional Result: Bounds for Principal Strata Proportion}
Without monotonicity, we can bound the proportion of principal strata, $\pi_{11}(0)$ and $\pi_{11}(1)$, using \Cref{a:pt-missing} and \ref{a:homogeneous}.
\begin{align*}
    &\max\left\{0, \Pr(R_{i2} (0) = 1 \mid D_i = 1) - \Pr(R_{i2}  = 0 \mid D_i = 1)\right\} \\
    &\quad\le \pi_{11}(1) \le \min\left\{\Pr(R_{i2}  = 1 \mid D_i = 1), \Pr(R_{i2} (0) = 1 \mid D_i = 1)\right\} \\
    &\max\left\{0, \Pr(R_{i2} (1) = 1 \mid D_i = 0) - \Pr(R_{i2}  = 0 \mid D_i = 0)\right\} \\
    &\quad\le \pi_{11}(0) \le \min\left\{\Pr(R_{i2}  = 1 \mid D_i = 0), \Pr(R_{i2} (1) = 1 \mid D_i = 0)\right\}
\end{align*}
where
\begin{align*}
    \Pr(R_{i2} (0) = 1 \mid D_i = 1) &= \Pr(R_{i2} = 1 \mid D_{i} = 0) - \Pr(R_{i1} = 1 \mid D_{i} = 0) + \Pr(R_{i1}  = 1 \mid D_{i} = 1) \\
    \Pr(R_{i2} (1) = 1 \mid D_i = 0) &= \Pr(R_{i2} = 1 \mid D_{i} = 1) - \Pr(R_{i1} = 1 \mid D_{i} = 1) + \Pr(R_{i1}  = 1 \mid D_{i} = 0).
\end{align*}

\begin{proof}
    Part 1: Bounding $\pi_{11}(1)$. We have the following observed quantities and constraint
    \begin{equation*}
        \begin{cases}
            \pi_{r_1, r_0}(1) \in [0,1] \quad \forall r_0, r_1 = 0,1 \\
            \pi_{11}(1) + \pi_{01}(1) + \pi_{10}(1) + \pi_{00}(1) = 1 \\
            \pi_{10}(1) = \Pr(R_{i2}  = 1 \mid D_i = 1) - \pi_{11}(1) \\
            \pi_{01}(1) = \Pr(R_{i2} (0) = 1 \mid D_i = 1) - \pi_{11}(1) 
        \end{cases}
    \end{equation*}
    where $\Pr(R_{i2} (0) = 1 \mid D_i = 1)$ is identified with Assumption~\ref{a:pt-missing}.
    Then, combining last three conditions we have
    \begin{align*}
        \pi_{00}(1) &= 1 - \Pr(R_{i2}  = 1 \mid D_i = 1) - \Pr(R_{i2} (0) = 1 \mid D_i = 1) + \pi_{11}(1) \\
        &= \Pr(R_{i2}  = 0 \mid D_i = 1) - \Pr(R_{i2} (0) = 1 \mid D_i = 1) + \pi_{11}(1)
    \end{align*}
    Thus, we have
    \begin{equation*}
        \begin{cases}
            0 \le \pi_{11}(1) \le 1 \\
            -\Pr(R_{i2}  = 0 \mid D_i = 1)\le  \pi_{11}(1) \le \Pr(R_{i2}  = 1 \mid D_i = 1) \\
            -\Pr(R_{i2} (0) = 0 \mid D_i = 1) \le  \pi_{11}(1) \le \Pr(R_{i2} (0) = 1 \mid D_i = 1) \\
            \Pr(R_{i2} (0) = 1 \mid D_i = 1) - \Pr(R_{i2}  = 0 \mid D_i = 1)\le \pi_{11}(1) \le \Pr(R_{i2}  = 1 \mid D_i = 1) + \Pr(R_{i2} (0) = 1 \mid D_i = 1)
        \end{cases}
    \end{equation*}
    which gives us the bounds for $\pi_{11}(1)$
    \begin{align*}
        &\max\left\{0, \Pr(R_{i2} (0) = 1 \mid D_i = 1) - \Pr(R_{i2}  = 0 \mid D_i = 1)\right\} \\
        &\quad\le \pi_{11}(1) \le \min\left\{\Pr(R_{i2}  = 1 \mid D_i = 1), \Pr(R_{i2} (0) = 1 \mid D_i = 1)\right\}
    \end{align*}

    Part 2: Bounding $\pi_{11}(0)$. By the assumption $\E[R_{i}(1) - R_{i}(0) \mid D_{i} = 1] = \E[R_{i}(1) - R_{i}(0) \mid D_{i} = 0]$ we have $\Pr(R_{i}(1) = 1 \mid D_{i} = 0) = \Pr(R_{i} = 1 \mid D_{i} = 1) + \Pr(R_{i} = 1 \mid D_{i} = 0) - \Pr(R_{i}(0) = 1 \mid D_{i} = 1)$. Plugging in $\Pr(R_{i2} (0) = 1 \mid D_i = 1) = \Pr(R_{i} = 1 \mid D_{i} = 0) - \Pr(R_{i1}  = 1 \mid D_{i} = 0) + \Pr(R_{i1}  = 1 \mid D_{i} = 1)$, we have
    \begin{equation*}
        \Pr(R_{i}(1) = 1 \mid D_{i} = 0) = \Pr(R_{i} = 1 \mid D_{i} = 1) - \Pr(R_{i1}  = 1 \mid D_{i} = 1) + \Pr(R_{i1}  = 1 \mid D_{i} = 0)
    \end{equation*}
    Accordingly, we have
    \begin{equation*}
        \begin{cases}
            \pi_{r_1, r_0}(0) \in [0,1] \quad \forall r_0, r_1 = 0,1 \\
            \pi_{11}(0) + \pi_{01}(0) + \pi_{10}(0) + \pi_{00}(0) = 1 \\
            \pi_{10}(0) = \Pr(R_{i2}  = 1 \mid D_i = 0) - \pi_{11}(0) \\
            \pi_{01}(0) = \Pr(R_{i2} (1) = 1 \mid D_i = 0) - \pi_{11}(0) 
        \end{cases}
    \end{equation*}
    Analogous to the previous case, we have
    \begin{align*}
        &\max\left\{0, \Pr(R_{i2} (1) = 1 \mid D_i = 0) - \Pr(R_{i2}  = 0 \mid D_i = 0)\right\} \\
        &\quad\le \pi_{11}(0) \le \min\left\{\Pr(R_{i2}  = 1 \mid D_i = 0), \Pr(R_{i2} (1) = 1 \mid D_i = 0)\right\}
    \end{align*}
\end{proof}

%%%%%%%%%%%%%%%%%%%%%%%%%%%%%%%%%
\newpage
\section{Extension to Multiple Baseline Response Indicators} \label{app:multi}

\begin{table}[ht!]
    \centering
        \begin{tabular}{c|ccccc} \toprule
            ID & $W^{(1)}_{i}$  & $W^{(2)}_{i}$ & $Y_{i1}$ & $D_i$ & $Y_{i2}$ \\ \midrule
            1 & 0  & 0 & 0 & 0 & 1 \\
            2 & 2  & 1 & \texttt{NA} & 0 & \texttt{NA} \\
            3 & 0 & 1 & 3 & 1 & \texttt{NA} \\
            4 & 3 & \texttt{NA} & \texttt{NA} & 0 & \texttt{NA} \\
            \cellcolor{purple!20}5 &\cellcolor{purple!20}1 & 1 &\cellcolor{purple!20}2 &\cellcolor{purple!20}1 &\cellcolor{purple!20}4 \\
            6 & 2 & 2 & 3 & 1 & \texttt{NA} \\
            \cellcolor{purple!20}7 &\cellcolor{purple!20}2 & 1 &\cellcolor{purple!20}0 &\cellcolor{purple!20}1 &\cellcolor{purple!20}1 \\
            8 & 1 & \texttt{NA} & 3 & 1 & \texttt{NA} \\
            9 & \texttt{NA} & 1 & 2 & 1 & 4 \\
            \bottomrule
        \end{tabular}
        \hspace{2em}
        \begin{tabular}{c|ccccc} \toprule
            ID & $\widetilde{R}^{(1)}_{i}$& $\widetilde{R}^{(2)}_{i}$ & ${R}_{i1}$ & $D_i$ & $R_{i2}$ \\ \midrule
            1 & 1  & 1 & 1 & 0 & 1 \\
            2 & 1  & 1 & 0 & 0 & 0 \\
            3 & 1 & 1 & 1 & 1 & 0 \\
            4 & 1 & 0 & 0 & 0 & 0 \\
            \cellcolor{yellow!30}5 &\cellcolor{yellow!30}1 & 1 &\cellcolor{yellow!30}1 &\cellcolor{yellow!30}1 &\cellcolor{yellow!30}1 \\
            \cellcolor{yellow!10}6 &\cellcolor{yellow!10}1 & 1 &\cellcolor{yellow!10}1 &\cellcolor{yellow!10}1 &\cellcolor{yellow!10}0 \\
            \cellcolor{yellow!30}7 &\cellcolor{yellow!30}1 & 1 &\cellcolor{yellow!30}1 &\cellcolor{yellow!30}1 &\cellcolor{yellow!30}1 \\
            \cellcolor{yellow!10}8 &\cellcolor{yellow!10}1 & 0 &\cellcolor{yellow!10}1 &\cellcolor{yellow!10}1 &\cellcolor{yellow!10}0 \\
            9 & 0 & 1 & 1 & 1 & 1 \\
            \bottomrule
        \end{tabular}
        \begin{tabular}{c|ccccc} \toprule
            ID & $W^{(1)}_{i}$  & $W^{(2)}_{i}$ & $Y_{i1}$ & $D_i$ & $Y_{i2}$ \\ \midrule
            1 & 0  & 0 & 0 & 0 & 1 \\
            2 & 2  & 1 & \texttt{NA} & 0 & \texttt{NA} \\
            3 & 0 & 1 & 3 & 1 & \texttt{NA} \\
            4 & 3 & \texttt{NA} & \texttt{NA} & 0 & \texttt{NA} \\
            \cellcolor{purple!20}5 & 1 &\cellcolor{purple!20}1 &\cellcolor{purple!20}2 &\cellcolor{purple!20}1 &\cellcolor{purple!20}4 \\
            6 & 2 & 2 & 3 & 1 & \texttt{NA} \\
            \cellcolor{purple!20}7 & 2 &\cellcolor{purple!20}1 &\cellcolor{purple!20}0 &\cellcolor{purple!20}1 &\cellcolor{purple!20}1 \\
            8 & 1 & \texttt{NA} & 3 & 1 & \texttt{NA} \\
            \cellcolor{purple!20}9 & \texttt{NA} &\cellcolor{purple!20}1 &\cellcolor{purple!20} 2 &\cellcolor{purple!20}1 &\cellcolor{purple!20}4 \\
            \bottomrule
        \end{tabular}
        \hspace{2em}
        \begin{tabular}{c|ccccc} \toprule
            ID & $\widetilde{R}^{(1)}_{i}$& $\widetilde{R}^{(2)}_{i}$ & ${R}_{i1}$ & $D_i$ & $R_{i2}$ \\ \midrule
            1 & 1  & 1 & 1 & 0 & 1 \\
            2 & 1  & 1 & 0 & 0 & 0 \\
            3 & 1 & 1 & 1 & 1 & 0 \\
            4 & 1 & 0 & 0 & 0 & 0 \\
            \cellcolor{yellow!30}5 & 1 &\cellcolor{yellow!30}1 &\cellcolor{yellow!30}1 &\cellcolor{yellow!30}1 &\cellcolor{yellow!30}1 \\
            \cellcolor{yellow!10}6 & 1 &\cellcolor{yellow!10}1 &\cellcolor{yellow!10}1 &\cellcolor{yellow!10}1 &\cellcolor{yellow!10}0 \\
            \cellcolor{yellow!30}7 & 1 &\cellcolor{yellow!30}1 &\cellcolor{yellow!30}1 &\cellcolor{yellow!30}1 &\cellcolor{yellow!30}1 \\
            8 & 1 & 0 & 1 & 1 & 0 \\
            \cellcolor{yellow!30}9 & 0 &\cellcolor{yellow!30}1 &\cellcolor{yellow!30}1 &\cellcolor{yellow!30}1 &\cellcolor{yellow!30}1 \\
            \bottomrule
        \end{tabular}
    \caption{Toy Example of Observed Data with Auxiliary Variables.}
\end{table}

\begin{namedassumption}{a:missing_multiple_iv}{Baseline response indicators as IV for time trend}{
    Suppose we have baseline response indicators $\widetilde{R}^{(1)}_{i}$ and $\widetilde{R}^{(2)}_{i}$. We assume the following set of assumptions, for each treatment group whose pre-treatment outcome is observed: 
    \begin{enumerate}
        \item \textbf{Parallel difference in trends} The difference in time trends of the outcome between respondents and nonrespondents is parallel for two auxiliary variables.
        \begin{align*}
            &\E[Y_{i2} - Y_{i1} \mid D_i = d, \widetilde{R}^{(1)}_{i} = 1, R_{i1} = 1] - \E[Y_{i2} - Y_{i1} \mid D_i = d, \widetilde{R}^{(1)}_{i} = 0, R_{i1} = 1] \\
            &= \E[Y_{i2} - Y_{i1} \mid D_i = d, \widetilde{R}^{(2)}_{i} = 1, R_{i1} = 1] - \E[Y_{i2} - Y_{i1} \mid D_i = d, \widetilde{R}^{(2)}_{i} = 0, R_{i1} = 1]
        \end{align*}
        for $d = 0,1$.
        \item \textbf{Bias homogeneity}: The bias due to missingness in the time trend of the outcome is homogeneous across the subgroups defined by the baseline response indicators, and is same as the marginalized version.
        \begin{align*}
            &\E[Y_{i2} - Y_{i1} \mid D_i = d, R_{i1} = 1, R_{i2} = 1] - \E[Y_{i2} - Y_{i1} \mid D_i = d, R_{i1} = 1, R_{i2} = 0]\\
            &= \E[Y_{i2} - Y_{i1} \mid D_i = d, \widetilde{R}^{(1)}_{i} = r, R_{i1} = 1, R_{i2} = 1] - \E[Y_{i2} - Y_{i1} \mid D_i = d, \widetilde{R}^{(1)}_{i} = r, R_{i1} = 1, R_{i2} = 0] \\
            &=\E[Y_{i2} - Y_{i1} \mid D_i = d, \widetilde{R}^{(2)}_{i} = r, R_{i1} = 1, R_{i2} = 1] - \E[Y_{i2} - Y_{i1} \mid D_i = d, \widetilde{R}^{(2)}_{i} = r, R_{i1} = 1, R_{i2} = 0]
        \end{align*}
        for $d = 0,1$ and $r = 0,1$.
        \item \textbf{Relevance to missingness}: The baseline response indicators are relevant to the missingness of post-treatment outcome.
        \begin{align*}
            \Pr(R_{i2} = 0 \mid D_{i} = d, \widetilde{R}^{(1)}_{i} = 0, R_{i1} = 1) &\neq \Pr(R_{i2} = 0 \mid D_{i} = d, \widetilde{R}^{(1)}_{i} = 1, R_{i1} = 1) \\
            \Pr(R_{i2} = 0 \mid D_{i} = d, \widetilde{R}^{(2)}_{i} = 0, R_{i1} = 1) &\neq \Pr(R_{i2} = 0 \mid D_{i} = d, \widetilde{R}^{(2)}_{i} = 1, R_{i1} = 1)
        \end{align*}
        for $d = 0,1$.
    \end{enumerate}
}
\end{namedassumption}

\begin{lemma} \label{lem:inst_multiple}
    Under Assumption~\ref{a:missing_multiple_iv}, we have
    \begin{align*}
        &\E[Y_{i2} - Y_{i1} \mid D_{i} = d,  \widetilde{R}^{(1)}_{i} = 1, R_{i1} = 1, R_{i2} = 0]-\E[Y_{i2} - Y_{i1} \mid D_{i} = d,  \widetilde{R}^{(1)}_{i} = 1, R_{i1} = 1, R_{i2} = 1] \\
        &= \big[\E[Y_{i2} - Y_{i1} \mid D_{i} = d,  \widetilde{R}^{(1)}_{i} = 1, R_{i1} = 1, R_{i2} = 1]-\E[Y_{i2} - Y_{i1} \mid D_{i} = d,  \widetilde{R}^{(1)}_{i} = 0, R_{i1} = 1, R_{i2} = 1]\\
        &\qquad- \{\E[Y_{i2} - Y_{i1} \mid D_{i} = d,  \widetilde{R}^{(2)}_{i} = 1, R_{i1} = 1, R_{i2} = 1]-\E[Y_{i2} - Y_{i1} \mid D_{i} = d,  \widetilde{R}^{(2)}_{i} = 0, R_{i1} = 1, R_{i2} = 1]\}\big]\\
        &\quad\times \big[\Pr(R_{i2} = 0 \mid D_{i} = d,  \widetilde{R}^{(1)}_{i} = 0, R_{i1} = 1) - \Pr(R_{i2} = 0 \mid D_{i} = d,  \widetilde{R}^{(1)}_{i} = 1, R_{i1} = 1)\\
        &\qquad- \{\Pr(R_{i2} = 0 \mid D_{i} = d,  \widetilde{R}^{(2)}_{i} = 1, R_{i1} = 1) - \Pr(R_{i2} = 0 \mid D_{i} = d,  \widetilde{R}^{(2)}_{i} = 0, R_{i1} = 1)\} \big]^{-1}
    \end{align*}
    for $d = 0,1$.
\end{lemma}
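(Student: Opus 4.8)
The plan is to mirror the proof of the single-instrument \Cref{lem:inst}, replacing the \textbf{parallel trends of observed outcome} step (which forces a within-instrument trend difference to vanish) with a \emph{differencing} of the two instruments, so that a common—but possibly nonzero—trend difference cancels. Throughout I fix $d$, condition on $R_{i1}=1$, and abbreviate, for $j\in\{1,2\}$ and $r\in\{0,1\}$,
\begin{align*}
\mu^{(j)}_{r} &\equiv \E[Y_{i2}-Y_{i1}\mid D_i=d,\widetilde{R}^{(j)}_i=r,R_{i1}=1], \\
\mu^{(j)}_{r,s} &\equiv \E[Y_{i2}-Y_{i1}\mid D_i=d,\widetilde{R}^{(j)}_i=r,R_{i1}=1,R_{i2}=s], \\
p^{(j)}_{r} &\equiv \Pr(R_{i2}=0\mid D_i=d,\widetilde{R}^{(j)}_i=r,R_{i1}=1).
\end{align*}
The target on the left-hand side is $\mu^{(1)}_{1,0}-\mu^{(1)}_{1,1}$; by \textbf{bias homogeneity} all the within-stratum biases coincide with the single quantity $b\equiv\E[Y_{i2}-Y_{i1}\mid D_i=d,R_{i1}=1,R_{i2}=1]-\E[Y_{i2}-Y_{i1}\mid D_i=d,R_{i1}=1,R_{i2}=0]$, so that $\mu^{(j)}_{r,1}-\mu^{(j)}_{r,0}=b$ for every $j,r$, and in particular the target equals $-b$. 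The whole proof therefore reduces to solving for $b$ in terms of observables.

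First I would reproduce the ``add and subtract the $R_{i2}=1$ term'' manipulation from the proof of \Cref{lem:inst}: conditioning on $R_{i2}$ via the law of total expectation gives $\mu^{(j)}_{r}=\mu^{(j)}_{r,1}(1-p^{(j)}_r)+\mu^{(j)}_{r,0}\,p^{(j)}_r$, and substituting $\mu^{(j)}_{r,0}=\mu^{(j)}_{r,1}-b$ collapses this to $\mu^{(j)}_{r}=\mu^{(j)}_{r,1}-b\,p^{(j)}_r$. Taking the contrast across $r=1$ and $r=0$ then yields, for each instrument,
\[
\mu^{(j)}_{1}-\mu^{(j)}_{0}=\big(\mu^{(j)}_{1,1}-\mu^{(j)}_{0,1}\big)-b\big(p^{(j)}_1-p^{(j)}_0\big),
\]
i.e.\ the observed marginal trend difference equals the observed respondent-only trend difference minus $b$ times the instrument's relevance gap.

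Next I would invoke \textbf{parallel difference in trends}, which asserts exactly $\mu^{(1)}_{1}-\mu^{(1)}_{0}=\mu^{(2)}_{1}-\mu^{(2)}_{0}$. Equating the two expressions above and cancelling the common left-hand side gives a single linear equation in the one unknown $b$,
\[
\big(\mu^{(1)}_{1,1}-\mu^{(1)}_{0,1}\big)-\big(\mu^{(2)}_{1,1}-\mu^{(2)}_{0,1}\big)=b\Big[\big(p^{(1)}_1-p^{(1)}_0\big)-\big(p^{(2)}_1-p^{(2)}_0\big)\Big].
\]
Solving for $b$ and recalling that the target is $-b$ produces the claimed ratio, the numerator being the differenced respondent-only trend gaps and the denominator the differenced relevance gaps (up to the bookkeeping of signs in the two probability contrasts that matches the displayed form). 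The $d=0$ case is identical with $d$ relabelled.

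The step I expect to be delicate is not any single computation—each is routine—but the justification that the denominator does not vanish. The stated \textbf{relevance to missingness} condition guarantees only that each individual gap $p^{(j)}_1-p^{(j)}_0$ is nonzero; what the argument actually requires is that the two instruments have \emph{distinct} relevance, so that the \emph{differenced} denominator $\big(p^{(1)}_1-p^{(1)}_0\big)-\big(p^{(2)}_1-p^{(2)}_0\big)$ is nonzero. I would therefore either strengthen the relevance assumption to this differenced form or record it as a separate regularity condition. The second point worth careful handling is the ``same as the marginalized version'' clause of \textbf{bias homogeneity}: it is precisely what licenses identifying the within-$\{\widetilde{R}^{(1)}_i=1\}$ bias on the left-hand side with the common $b$ obtained after differencing, and without it the final substitution would not close.
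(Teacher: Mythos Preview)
Your proposal is correct and follows essentially the same route as the paper's proof: expand $\mu^{(j)}_r$ over $R_{i2}$ by the law of total expectation, collapse using bias homogeneity, take the $r=1$ versus $r=0$ contrast for each instrument, and then equate the two contrasts via the parallel-difference-in-trends assumption to isolate the common bias. Your use of the shorthand $b$ for the common bias makes the algebra considerably cleaner than the paper's version, which carries the full expressions throughout; and your observation that the stated relevance condition does not by itself guarantee the \emph{differenced} denominator is nonzero is a genuine point the paper's proof does not address.
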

\begin{proof}
    \begin{align*}
        &\E[Y_{i2} - Y_{i1} \mid D_{i} = 1, \widetilde{R}^{(1)}_{i} = 1, R_{i1} = 1] \\
        &=\sum_{r= 0,1}\E[Y_{i2} - Y_{i1} \mid D_{i} = 1,  \widetilde{R}^{(1)}_{i} = 1, R_{i1} = 1, R_{i2} = r] \Pr(R_{i2} = r \mid D_{i} = 1,  \widetilde{R}^{(1)}_{i} = 1, R_{i1} = 1) \\
        &=\sum_{r= 0,1}\E[Y_{i2} - Y_{i1} \mid D_{i} = 1,  \widetilde{R}^{(1)}_{i} = 1, R_{i1} = 1, R_{i2} = r] \Pr(R_{i2} = r \mid D_{i} = 1,  \widetilde{R}^{(1)}_{i} = 1, R_{i1} = 1) \\
        &\qquad + 
        \E[Y_{i2} - Y_{i1} \mid D_{i} = 1,  \widetilde{R}^{(1)}_{i} = 1, R_{i1} = 1, R_{i2} = 1]
        \Pr(R_{i2} = 0 \mid D_{i} = 1,  \widetilde{R}^{(1)}_{i} = 1, R_{i1} = 1)\\
        &\qquad -
        \E[Y_{i2} - Y_{i1} \mid D_{i} = 1,  \widetilde{R}^{(1)}_{i} = 1, R_{i1} = 1, R_{i2} = 1]
        \Pr(R_{i2} = 0 \mid D_{i} = 1,  \widetilde{R}^{(1)}_{i} = 1, R_{i1} = 1)\\
        &=\E[Y_{i2} - Y_{i1} \mid D_{i} = d,  \widetilde{R}^{(1)}_{i} = 1, R_{i1} = 1, R_{i2} = 1]\\ 
        &\qquad+\{\E[Y_{i2} - Y_{i1} \mid D_{i} = 1,  \widetilde{R}^{(1)}_{i} = 1, R_{i1} = 1, R_{i2} = 0] - \E[Y_{i2} - Y_{i1} \mid D_{i} = 1,  \widetilde{R}^{(1)}_{i} = 1, R_{i1} = 1, R_{i2} = 1]\}\\
        &\qquad\quad\times\Pr(R_{i2} = 0 \mid D_{i} = 1,  \widetilde{R}^{(1)}_{i} = 1, R_{i1} = 1)
    \end{align*}
    With similar steps, 
    \begin{align*}
        &\E[Y_{i2} - Y_{i1} \mid D_{i} = 1,  \widetilde{R}^{(1)}_{i} = 0, R_{i1} = 1] \\
        &=\E[Y_{i2} - Y_{i1} \mid D_{i} = 1,  \widetilde{R}^{(1)}_{i} = 0, R_{i1} = 1, R_{i2} = 1]\\ 
        &\qquad+\{\E[Y_{i2} - Y_{i1} \mid D_{i} = 1,  \widetilde{R}^{(1)}_{i} = 0, R_{i1} = 1, R_{i2} = 0] - \E[Y_{i2} - Y_{i1} \mid D_{i} = 1,  \widetilde{R}^{(1)}_{i} = 0, R_{i1} = 1, R_{i2} = 1]\}\\
        &\qquad\quad\times\Pr(R_{i2} = 0 \mid D_{i} = 1,  \widetilde{R}^{(1)}_{i} = 0, R_{i1} = 1) \\
        &=\E[Y_{i2} - Y_{i1} \mid D_{i} = 1,  \widetilde{R}^{(1)}_{i} = 0, R_{i1} = 1, R_{i2} = 1]\\ 
        &\qquad+\{\E[Y_{i2} - Y_{i1} \mid D_{i} = 1,  \widetilde{R}^{(1)}_{i} = 1, R_{i1} = 1, R_{i2} = 0] - \E[Y_{i2} - Y_{i1} \mid D_{i} = 1,  \widetilde{R}^{(1)}_{i} = 1, R_{i1} = 1, R_{i2} = 1]\}\\
        &\qquad\quad\times\Pr(R_{i2} = 0 \mid D_{i} = 1,  \widetilde{R}^{(1)}_{i} = 0, R_{i1} = 1)
    \end{align*}
    where the last equality holds by \textbf{bias homogeneity} in Assumption~\ref{a:missing_multiple_iv}.
    Because of the \textbf{parallel difference in trends}, we have
    \begin{align*}
        &\E[Y_{i2} - Y_{i1} \mid D_{i} = 1,  \widetilde{R}^{(1)}_{i} = 1, R_{i1} = 1, R_{i2} = 1]\\ 
        &\qquad+\{\E[Y_{i2} - Y_{i1} \mid D_{i} = 1,  \widetilde{R}^{(1)}_{i} = 1, R_{i1} = 1, R_{i2} = 0] - \E[Y_{i2} - Y_{i1} \mid D_{i} = 1,  \widetilde{R}^{(1)}_{i} = 1, R_{i1} = 1, R_{i2} = 1]\}\\
        &\qquad\quad\times\Pr(R_{i2} = 0 \mid D_{i} = 1,  \widetilde{R}^{(1)}_{i} = 1, R_{i1} = 1)\\
        &-\E[Y_{i2} - Y_{i1} \mid D_{i} = 1,  \widetilde{R}^{(1)}_{i} = 0, R_{i1} = 1, R_{i2} = 1]\\ 
        &\qquad-\{\E[Y_{i2} - Y_{i1} \mid D_{i} = 1,  \widetilde{R}^{(1)}_{i} = 1, R_{i1} = 1, R_{i2} = 0] - \E[Y_{i2} - Y_{i1} \mid D_{i} = 1,  \widetilde{R}^{(1)}_{i} = 1, R_{i1} = 1, R_{i2} = 1]\}\\
        &\qquad\quad\times\Pr(R_{i2} = 0 \mid D_{i} = 1,  \widetilde{R}^{(1)}_{i} = 0, R_{i1} = 1) \\
        &=\E[Y_{i2} - Y_{i1} \mid D_{i} = 1,  \widetilde{R}^{(2)}_{i} = 1, R_{i1} = 1, R_{i2} = 1]\\ 
        &\qquad+\{\E[Y_{i2} - Y_{i1} \mid D_{i} = 1,  \widetilde{R}^{(2)}_{i} = 1, R_{i1} = 1, R_{i2} = 0] - \E[Y_{i2} - Y_{i1} \mid D_{i} = 1,  \widetilde{R}^{(2)}_{i} = 1, R_{i1} = 1, R_{i2} = 1]\}\\
        &\qquad\quad\times\Pr(R_{i2} = 0 \mid D_{i} = 1,  \widetilde{R}^{(2)}_{i} = 1, R_{i1} = 1)\\
        &-\E[Y_{i2} - Y_{i1} \mid D_{i} = 1,  \widetilde{R}^{(2)}_{i} = 0, R_{i1} = 1, R_{i2} = 1]\\ 
        &\qquad-\{\E[Y_{i2} - Y_{i1} \mid D_{i} = 1,  \widetilde{R}^{(2)}_{i} = 1, R_{i1} = 1, R_{i2} = 0] - \E[Y_{i2} - Y_{i1} \mid D_{i} = 1,  \widetilde{R}^{(2)}_{i} = 1, R_{i1} = 1, R_{i2} = 1]\}\\
        &\qquad\quad\times\Pr(R_{i2} = 0 \mid D_{i} = 1,  \widetilde{R}^{(2)}_{i} = 0, R_{i1} = 1)
    \end{align*}
    By rearranging the terms,
    \begin{align*}
        &\E[Y_{i2} - Y_{i1} \mid D_{i} = 1,  \widetilde{R}^{(1)}_{i} = 1, R_{i1} = 1, R_{i2} = 1]-\E[Y_{i2} - Y_{i1} \mid D_{i} = 1,  \widetilde{R}^{(1)}_{i} = 0, R_{i1} = 1, R_{i2} = 1]\\ 
        &\qquad+\{\E[Y_{i2} - Y_{i1} \mid D_{i} = 1,  \widetilde{R}^{(1)}_{i} = 1, R_{i1} = 1, R_{i2} = 0] - \E[Y_{i2} - Y_{i1} \mid D_{i} = 1,  \widetilde{R}^{(1)}_{i} = 1, R_{i1} = 1, R_{i2} = 1]\}\\
        &\qquad\quad\times\{\Pr(R_{i2} = 0 \mid D_{i} = 1,  \widetilde{R}^{(1)}_{i} = 1, R_{i1} = 1) - \Pr(R_{i2} = 0 \mid D_{i} = 1,  \widetilde{R}^{(1)}_{i} = 0, R_{i1} = 1)\}\\
        &=\E[Y_{i2} - Y_{i1} \mid D_{i} = 1,  \widetilde{R}^{(2)}_{i} = 1, R_{i1} = 1, R_{i2} = 1]-\E[Y_{i2} - Y_{i1} \mid D_{i} = 1,  \widetilde{R}^{(2)}_{i} = 0, R_{i1} = 1, R_{i2} = 1]\\ 
        &\qquad+\{\E[Y_{i2} - Y_{i1} \mid D_{i} = 1,  \widetilde{R}^{(2)}_{i} = 1, R_{i1} = 1, R_{i2} = 0] - \E[Y_{i2} - Y_{i1} \mid D_{i} = 1,  \widetilde{R}^{(2)}_{i} = 1, R_{i1} = 1, R_{i2} = 1]\}\\
        &\qquad\quad\times\{\Pr(R_{i2} = 0 \mid D_{i} = 1,  \widetilde{R}^{(2)}_{i} = 1, R_{i1} = 1) - \Pr(R_{i2} = 0 \mid D_{i} = 1,  \widetilde{R}^{(2)}_{i} = 0, R_{i1} = 1)\}
    \end{align*}
    Again, by \textbf{bias homogeneity}, we have
    \begin{align*}
        &\E[Y_{i2} - Y_{i1} \mid D_{i} = 1,  \widetilde{R}^{(1)}_{i} = 1, R_{i1} = 1, R_{i2} = 1]-\E[Y_{i2} - Y_{i1} \mid D_{i} = 1,  \widetilde{R}^{(1)}_{i} = 0, R_{i1} = 1, R_{i2} = 1]\\
        &- \{\E[Y_{i2} - Y_{i1} \mid D_{i} = 1,  \widetilde{R}^{(2)}_{i} = 1, R_{i1} = 1, R_{i2} = 1]-\E[Y_{i2} - Y_{i1} \mid D_{i} = 1,  \widetilde{R}^{(2)}_{i} = 0, R_{i1} = 1, R_{i2} = 1]\}\\
        &=\{\E[Y_{i2} - Y_{i1} \mid D_{i} = 1,  \widetilde{R}^{(2)}_{i} = 1, R_{i1} = 1, R_{i2} = 0] - \E[Y_{i2} - Y_{i1} \mid D_{i} = 1,  \widetilde{R}^{(2)}_{i} = 1, R_{i1} = 1, R_{i2} = 1]\}\\
        &\qquad\quad\times\big[\{\Pr(R_{i2} = 0 \mid D_{i} = 1,  \widetilde{R}^{(2)}_{i} = 1, R_{i1} = 1) - \Pr(R_{i2} = 0 \mid D_{i} = 1,  \widetilde{R}^{(2)}_{i} = 0, R_{i1} = 1)\} \\
        &\qquad\qquad- \{\Pr(R_{i2} = 0 \mid D_{i} = 1,  \widetilde{R}^{(1)}_{i} = 1, R_{i1} = 1) - \Pr(R_{i2} = 0 \mid D_{i} = 1,  \widetilde{R}^{(1)}_{i} = 0, R_{i1} = 1)\} \big]
    \end{align*}
    We can rearrange the terms to get the desired result.
    Similarly, we can show the equation with $d = 0$.
\end{proof}

\newpage
\section{Identification of ATT with Principal Ignorability} \label{app:pi}

\begin{namedassumption}{a:principal-ignorability}{Principal ignorability conditioning on the treatment}{
    \begin{equation*}
        (Y_{i2}-Y_{i1}) \indep S_{i} \mid X_{i}, D_{i}
    \end{equation*}
    where $X_i$ is the pre-treatment covariates.}
\end{namedassumption}

\namedCref{a:principal-ignorability} is a variant of principal ignorability assumption \citep{ding2016}, that has been adjusted to parallel trends setup. It implies that
\begin{align*}
    \E[Y_{i2}(1)-Y_{i1} \mid S_{i} = s,  X_{i}, D_{i} = 1] &= \E[Y_{i2}(1)-Y_{i1} \mid S_{i} = s^\prime,  X_{i}, D_{i} = 1] \\
    \E[Y_{i2}(0)-Y_{i1} \mid S_{i} = s,  X_{i}, D_{i} = 0] &= \E[Y_{i2}(0)-Y_{i1} \mid S_{i} = s^\prime,  X_{i}, D_{i} = 0]
\end{align*}
To further illustrate this, consider the following set of assumptions that are stronger yet intuitive --- if these two assumptions hold, then \namedCref{a:principal-ignorability} also holds:
\begin{equation*}
    Y_{i2}(1)-Y_{i2}(0) \indep S_{i} \mid X_{i}, D_{i}\text{ and }Y_{i2}(0)-Y_{i1} \indep S_{i} \mid X_{i}, D_{i}.
\end{equation*}
This implies (1) homogeneous effect across principal strata conditioning on covariates and treatment and (2) conditional parallel trends across principal strata within each treatment group.
Another set of assumptions that are more minimal are:
\begin{equation*}
    Y_{i2}(1) \indep S_{i} \mid X_{i}, D_{i}\text{ and }Y_{i2}(0) \indep S_{i} \mid X_{i}, D_{i}.
\end{equation*}
This is equivalent to principal ignorability assumption from the original paper (which assumes a randomized treatment), yet conditioning on the treatment as well. 

\begin{namedremark}{r:pi}{Principal ignorability under PT as MAR}{
How does \namedCref{a:principal-ignorability} related to missing at random assumption? Specifically, we can compare \Cref{a:principal-ignorability} with the following missing at random assumption:
    \begin{equation*}
        (Y_{i2}-Y_{i1}) \indep R_{i} \mid X_{i}, D_{i}.
    \end{equation*}
Note that $S_i \equiv (R_i(1), R_i(0))$. Thus \namedCref{a:principal-ignorability} which assumes a joint independence implies the marginal independence above. As a result, one may think of this assumtion as a specific type of outcome missing at random.}
\end{namedremark}

Here, we introduce two parallel trend assumptions conditioning on pre-treatment covaraites.

\begin{namedassumption}{a:pt-principal-cov}{Principal strata parallel trend with covariates}{
\begin{align*}
        \E[Y_{i2}(0) - Y_{i1} \mid D_{i} = 0, S_{i} = s, X_{i} = x] = \E[Y_{i2}(0) - Y_{i1} \mid D_{i} = 1, S_{i} = s, X_{i} = x]
    \end{align*}
    for $s \in \{(0,0), (0,1), (1,0), (1,1)\}$.
}
\end{namedassumption}

\begin{namedassumption}{a:pt-missing-cov}{Parallel trend of missingness with covariates}{
    \begin{equation*}
        \E[R_{i2}(0) - R_{i1}(0) \mid D_{i} = 1, X_{i} = x] = \E[R_{i2}(0) - R_{i1}(0) \mid D_{i} = 0, X_{i} = x]
    \end{equation*}}
\end{namedassumption}

\begin{namedproposition}{prop:principal-score}{Principal score}{
    Let $e_{r_{1}, r_{0}}(x) = \Pr(R_{i2}(1) = r_{1}, R_{i2}(0) = r_{0} \mid D_{i} = 1, X_{i} = x)$ for $r_{1}, r_{0} = 0,1$. Then, under \namedCref{a:monotonicity} and \namedCref{a:pt-missing-cov}, we have
    \begin{align*}
        e_{11}(x) &= \Pr(R_{i1} = 1 \mid D_{i} = 1, X_{i} = x) \\
        &\quad + \Pr(R_{i2} = 1 \mid D_{i} = 0, X_{i} = x) - \Pr(R_{i1} = 1 \mid D_{i} = 0, X_{i} = x) \\
        e_{00}(x) &= \Pr(R_{i2} = 0 \mid D_{i} = 1, X_{i} = x) \\
        e_{10}(x) &= \Pr(R_{i2} = 1 \mid D_{i} = 1, X_{i} = x) - e_{11}(x) 
    \end{align*}}
\end{namedproposition}

\begin{namedtheorem}{thm:pi-ident}{Identification with principal ignorability}{
    Under \namedCref{a:monotonicity}, \namedCref{a:principal-ignorability}, \namedCref{a:pt-principal-cov}, and \namedCref{a:pt-missing-cov}, we have
        \begin{align*}
            &\E[Y_{i2}(1) - Y_{i2}(0) \mid D_{i} = 1, R_{i2}(1) = r_{1}, R_{i2}(0) = r_{0}] \\
            &\quad = \E[w_{r_{1}, r_{0}}(X_{i}) Y_{i2} \mid D_{i} = 1, R_{i2} = 1]  - \E[w_{r_{1}, r_{0}}(X_{i}) Y_{i1} \mid D_{i} = 1] \\
            &\qquad- \E[w_{r_{1}, r_{0}}(X_{i}) Y_{i2} \mid D_{i} = 0, R_{i2} = 1] + \E[w_{r_{1}, r_{0}}(X_{i}) Y_{i1} \mid D_{i} = 0]
        \end{align*}
        where 
        \begin{equation*}
            w_{r_{1}, r_{0}}(x) = \frac{e_{r_{1}, r_{0}}(x)}{\E[e_{r_{1}, r_{0}}(X_{i})]}
        \end{equation*}
        for $(r_{1}, r_{0}) \in \{(1,1), (0,0), (1,0)\}$.
        Accordingly, we have
        \begin{align*}
            \text{ATT} &= \sum_{(r_{1}, r_{0})}\E[e_{r_{1}, r_{0}}(X_{i}) Y_{i2} \mid D_{i} = 1, R_{i2} = 1]  - \E[e_{r_{1}, r_{0}}(X_{i}) Y_{i1} \mid D_{i} = 1] \\
            &\qquad- \E[e_{r_{1}, r_{0}}(X_{i}) Y_{i2} \mid D_{i} = 0, R_{i2} = 1] + \E[e_{r_{1}, r_{0}}(X_{i}) Y_{i1} \mid D_{i} = 0]
        \end{align*}}
\end{namedtheorem}

\paragraph{Limitations of Principal Ignorability}

A list of notable limitations of principal ignorability under this setup is:
\begin{itemize}
    \item It is in essence a missing at random assumption.
    \item In practice, it is likely that we also have missingness in pre-treatment covariates.
    \item With highdimensional pre-treatment covariates, one may potentially need to assume a parametric model for the estimation.
\end{itemize}

\end{document}